\newtheorem{theorem}{Theorem}[section]
\newtheorem{proposition}{Proposition}[section]
\newtheorem{lemma}{Lemma}[section]
\newtheorem{corollary}{Corollary}[section]
\theoremstyle{thmstyletwo}%
\newtheorem{remark}{Remark}%
\begin{document}

\title[Success Probability in Shor’s Algorithm]{Success Probability in Shor’s Algorithm}

\author*[1]{\fnm{Ali} \sur{Abbassi}}\email{ali.abbassi@orange.com}
\author*[2]{\fnm{Lionel} \sur{Bayle}}\email{lionel.bayle@univ-angers.fr}

\affil*[1]{\orgdiv{LIST3N}, \orgname{Orange Innovation}, \orgaddress{\country{France}}}
\affil*[2]{\orgdiv{LAREMA, Sfr MATHSTIC}, \orgname{CNRS, University of Angers}, \orgaddress{\city{Angers}, \postcode{F-49000}, \country{France}}}


\abstract{This paper aims to determine the exact success probability at each step of Shor’s algorithm. Although the literature usually provides a lower bound on this probability, we present an improved bound. The derived formulas enable the identification of all failure cases in Shor’s algorithm, which correspond to a success probability of zero. A simulation routine is provided to evaluate the theoretical success probability for a given integer when its prime factorization is known with potential applications in quantum resource estimation and algorithm benchmarking.}

\keywords{Shor's algorithm, group theory, quantum computing}

\maketitle

\section{Introduction}\label{sec1}

Shor’s algorithm marked a turning point in quantum computing by offering an exponentially faster method for integer factorization~\cite{Shor1997}. The algorithm relies on the selection of a pseudo-random integer satisfying certain properties and proceeds in three main steps: random selection, quantum order finding, and classical post-processing. A lower bound on the success probability was given in~\cite{Shor1996}, but no closed-form expression or exact value is known.

The goal of this paper is to compute the exact success probability for each of the three steps of Shor’s algorithm, as well as for the full algorithm, across all integers—including those not simply of the form \( N = pq \). We provide these values in closed form for all relevant cases based on the prime factor decomposition of the input. Furthermore, we identify and classify all specific failure cases that may occur in different stages of the algorithm.

We emphasize that our analysis is based on Shor’s original algorithm. However, more efficient classical post-processing methods have since been developed that allow complete factorization from a single order-finding call with high probability~\cite{Ekera2021, Ekera2023}. Implementations of these improved techniques are available in open-source libraries such as \texttt{Quaspy} and \texttt{Factoritall}.

A simulation routine is included for evaluating the theoretical success probability for a given input when its factorization is known, with applications in quantum resource estimation and algorithm benchmarking.

This article is structured as follows. Section~\ref{sec2} introduces the mathematical foundations of Shor’s algorithm. Sections~\ref{sec:step1} to~\ref{sec3} compute the exact success probabilities for each of the three steps. Section~\ref{success-pr} gives the probability of overall success. Section~\ref{sec:pseudocode} describes a simulation procedure based on these results, for use in controlled environments where the factorization of \( N \) is available. The final section~\ref{conclusion} presents the conclusion.

\section{Mathematical Foundations of Shor's Algorithm}\label{sec2}

Shor’s algorithm \cite{Shor1997} is a quantum algorithm that solves the problem of integer factorization exponentially faster than the best-known classical algorithms. Given a composite integer \( N \geq 2 \), the algorithm finds a non-trivial factor of \( N \) by reducing the problem to that of order finding in the multiplicative group \( (\mathbb{Z}/N\mathbb{Z})^* \). More precisely, for a randomly chosen integer \( a \in \{2, \dots, N-1\} \) with \( \gcd(a, N) = 1 \), the algorithm seeks the smallest integer \( r \in \mathbb{N}^* \) such that:
\begin{equation}
    a^r \equiv 1 \pmod{N}. \label{eq:order}
\end{equation}
This integer \( r \) is called the \emph{order} of \( a \) modulo \( N \), and under appropriate conditions, it reveals non-trivial factors of \( N \) with high probability.

The quantum speed-up arises from the efficient computation of \( r \) via the \emph{Quantum Phase Estimation} (QPE) algorithm applied to the unitary operator corresponding to modular exponentiation. This quantum subroutine runs in time \(O((\log N)^3) \) and space \( O(\log N) \), whereas classical algorithms like the Number Field Sieve run in time roughly \(\exp\left(O\left((\log N)^{1/3} (\log \log N)^{2/3}\right)\right)\), which is sub-exponential but super-polynomial in \(\log N\) \cite{Pomerance}
under standard assumptions. The classical parts of the algorithm include random sampling, GCD computations (e.g., via Euclid’s algorithm), and simple divisibility tests.

The full procedure is summarized in Algorithm~\ref{algo-shor}.

\begin{algorithm}[H]
\caption{Shor’s Factoring Algorithm}\label{algo-shor}
\begin{algorithmic}[1]
\Require Composite integer \( N > 1 \)
\Ensure A non-trivial factor of \( N \)
\Repeat
    \State Select random \( a \in \{2, \dots, N-1\} \)
    \State \( d \gets \gcd(a, N) \)
    \If{\( d > 1 \)} \Return \( d \) \EndIf
    \State Use QPE to compute order \( r \) of \( a \mod N \)
    \If{\( r \) odd or \( a^{r/2} \equiv -1 \mod N \)} \textbf{continue} \EndIf
    \State \( x_1 \gets \gcd(a^{r/2} - 1, N) \), \( x_2 \gets \gcd(a^{r/2} + 1, N) \)
    \If{\( 1 < x_1 < N \)} \Return \( x_1 \)
    \ElsIf{\( 1 < x_2 < N \)} \Return \( x_2 \)
    \EndIf
\Until{a factor is found}
\end{algorithmic}
\end{algorithm}

The algorithm succeeds when the order \( r \) is even and satisfies \( a^{r/2} \not\equiv -1 \pmod{N} \). Under these conditions, one has the identity:
\begin{equation}
    (a^{r/2} - 1)(a^{r/2} + 1) \equiv 0 \pmod{N}, \label{eq:factorization}
\end{equation}
which implies that \( N \mid (a^{r/2} - 1)(a^{r/2} + 1) \). As a result, at least one of the values \( \gcd(a^{r/2} \pm 1, N) \) yields a non-trivial factor of \( N \).

The probability that a single iteration of the algorithm yields a factor is closely related to the structure of the multiplicative group \( (\mathbb{Z}/N\mathbb{Z})^* \), particularly the proportion of elements with even order and the behavior of their roots.

The key mathematical insight is that the distribution of orders \( r \) depends on the decomposition of \( (\mathbb{Z}/N\mathbb{Z})^* \) into cyclic subgroups. The following structural results describe this decomposition:

\begin{theorem}[\cite{Rombaldi}]\label{thm:Z2k}
Let \( k \in \mathbb{N}^* \). Then:
\begin{itemize}
    \item \( \left((\mathbb{Z}/2\mathbb{Z})^*, \times\right) \cong (\{0\}, +) \);
    \item \( \left((\mathbb{Z}/4\mathbb{Z})^*, \times\right) \cong (\mathbb{Z}/2\mathbb{Z}, +) \);
    \item For \( k \geq 3 \),
    \[
        \left((\mathbb{Z}/2^k\mathbb{Z})^*, \times\right) \cong \left(\mathbb{Z}/2\mathbb{Z} \times \mathbb{Z}/2^{k-2}\mathbb{Z},+\right).
    \]
\end{itemize}
\end{theorem}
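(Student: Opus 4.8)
The plan is to dispatch the first two items by direct inspection and to treat the case $k\geq 3$ by exhibiting an explicit internal direct product. For $k=1$ the group $(\mathbb{Z}/2\mathbb{Z})^*=\{\overline 1\}$ is trivial; for $k=2$ the group $(\mathbb{Z}/4\mathbb{Z})^*=\{\overline 1,\overline 3\}$ has order $2$, hence is cyclic of order $2$. These two isomorphisms are then immediate. For $k\geq 3$ I would use that $\bigl|(\mathbb{Z}/2^k\mathbb{Z})^*\bigr|=2^{k-1}$ and build the decomposition from the two obvious cyclic subgroups $\langle \overline{-1}\rangle$ and $\langle \overline 5\rangle$.

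First I would note that $\overline{-1}$ has order $2$ (it is not $\equiv 1$ since $2^k>2$). The core of the argument is the claim that $\overline 5$ has order exactly $2^{k-2}$. To prove this I would establish, by induction on $k\geq 3$, the congruence
\[
5^{2^{k-3}}\equiv 1+2^{k-1}\pmod{2^k},
\]
the base case $k=3$ being $5\equiv 1+4\pmod 8$, and the inductive step following by squaring: if $5^{2^{k-3}}=1+2^{k-1}+m2^k$ then
\[
5^{2^{k-2}}=\bigl(1+2^{k-1}+m2^k\bigr)^2\equiv 1+2^k\pmod{2^{k+1}},
\]
because $2(k-1)\geq k+1$ for $k\geq 3$, which is exactly the desired congruence with $k$ replaced by $k+1$. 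This congruence simultaneously shows $5^{2^{k-3}}\not\equiv 1\pmod{2^k}$ and $5^{2^{k-2}}\equiv 1\pmod{2^k}$; since the order of $\overline 5$ divides $2^{k-2}$, is a power of $2$, and does not divide $2^{k-3}$, it equals $2^{k-2}$.

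It then remains to verify that $\langle \overline{-1}\rangle\cap\langle \overline 5\rangle=\{\overline 1\}$. This is immediate from reduction modulo $4$: every power of $\overline 5$ is $\equiv 1\pmod 4$ whereas $\overline{-1}\equiv 3\pmod 4$, so $\overline{-1}\notin\langle \overline 5\rangle$, and the intersection, being a subgroup of the order-$2$ group $\langle\overline{-1}\rangle$, is trivial. Since $(\mathbb{Z}/2^k\mathbb{Z})^*$ is abelian, the product subgroup $\langle \overline{-1}\rangle\,\langle \overline 5\rangle$ has order $2\cdot 2^{k-2}=2^{k-1}$, hence is the whole group, and the internal direct product decomposition gives $(\mathbb{Z}/2^k\mathbb{Z})^*\cong \mathbb{Z}/2\mathbb{Z}\times\mathbb{Z}/2^{k-2}\mathbb{Z}$. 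The one delicate point — the main obstacle — is the inductive congruence pinning down the order of $\overline 5$; once that is in hand, everything else is bookkeeping with subgroup orders.
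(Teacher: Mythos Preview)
Your argument is correct. The paper does not supply its own proof of this theorem; it is quoted from \cite{Rombaldi} and used as a black box, so there is no in-paper proof to compare against. That said, your choice of generators matches exactly what the paper relies on later: in the analysis of the case $N=2^k$ with $k\geq 3$, the paper writes the four square roots of $1$ as $1\simeq(0,0)$, $5^{2^{k-3}}\simeq(0,2^{k-3})$, $-1\simeq(1,0)$, and $-5^{2^{k-3}}\simeq(1,2^{k-3})$, which is precisely the coordinatization coming from the internal direct product $\langle\overline{-1}\rangle\times\langle\overline{5}\rangle$ you construct. Your inductive computation of the order of $\overline 5$ via $5^{2^{k-3}}\equiv 1+2^{k-1}\pmod{2^k}$ is the standard route and is carried out cleanly; the reduction-mod-$4$ check that $\overline{-1}\notin\langle\overline 5\rangle$ is the right way to see the intersection is trivial. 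Nothing is missing.
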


\begin{theorem}[\cite{Rombaldi}]\label{thm:Zpk}
Let \( p \) be an odd prime and \( k \in \mathbb{N}^* \). Then:
\[
    \left((\mathbb{Z}/p^k\mathbb{Z})^*, \times\right) \cong \left(\mathbb{Z}/[(p-1)p^{k-1}]\mathbb{Z},+\right).
\]
\end{theorem}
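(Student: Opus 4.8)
The plan is to establish this in two stages: first the base case $k=1$, that $(\mathbb{Z}/p\mathbb{Z})^*$ is cyclic of order $p-1$, and then a lifting argument that propagates a generator from the modulus $p$ to every modulus $p^k$. Throughout, the order of the group is known to be $\varphi(p^k) = (p-1)p^{k-1}$, so it suffices to exhibit a single element of that order.

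For $k = 1$: since $p$ is prime, $\mathbb{Z}/p\mathbb{Z}$ is a field, and I would invoke the classical fact that every finite subgroup of the multiplicative group of a field is cyclic. The argument: let $m$ be the exponent of the finite abelian group $G = (\mathbb{Z}/p\mathbb{Z})^*$, i.e.\ the least common multiple of the orders of its elements; in a finite abelian group there is an element whose order is exactly $m$, so it is enough to show $m = |G|$. Every element of $G$ is a root of $X^m - 1$, but a polynomial of degree $m$ over a field has at most $m$ roots, forcing $|G| \le m$; combined with $m \mid |G|$ this gives $m = |G|$, so $G$ is cyclic.

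For the lifting step, I would first choose a primitive root $g$ modulo $p$ that also satisfies $g^{p-1} \not\equiv 1 \pmod{p^2}$: if a given primitive root fails this, replacing $g$ by $g+p$ works, since $(g+p)^{p-1} \equiv g^{p-1} + (p-1)p\,g^{p-2} \pmod{p^2}$ and $(p-1)p\,g^{p-2}$ is not divisible by $p^2$. The heart of the matter is then the inductive claim that for every $j \ge 1$,
\[
    g^{(p-1)p^{j-1}} \equiv 1 + c_j\, p^{j} \pmod{p^{j+1}}, \qquad p \nmid c_j,
\]
proved by raising the congruence at level $j$ to the $p$-th power and expanding with the binomial theorem; this is exactly where $p$ being odd is essential, because one needs the quadratic term $\binom{p}{2} c_j^2 p^{2j}$ to be divisible by $p^{j+2}$, which uses $p \mid \binom{p}{2}$ — false for $p=2$, which is why the prime $2$ is handled separately in Theorem~\ref{thm:Z2k}. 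Granting this claim, the order of $g$ modulo $p^k$ is a multiple of $p-1$ (apply the reduction homomorphism $(\mathbb{Z}/p^k\mathbb{Z})^* \to (\mathbb{Z}/p\mathbb{Z})^*$) and divides $\varphi(p^k) = (p-1)p^{k-1}$, hence equals $(p-1)p^{j}$ for some $0 \le j \le k-1$; but the claim at level $k-1$ shows $g^{(p-1)p^{k-2}} \not\equiv 1 \pmod{p^k}$, which rules out $j \le k-2$ and forces the order to be $(p-1)p^{k-1}$. Thus $g$ generates $(\mathbb{Z}/p^k\mathbb{Z})^*$, which is therefore cyclic of the stated order.

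The main obstacle is the inductive $p$-adic valuation computation in the displayed congruence: one must track the exact power of $p$ dividing $g^{(p-1)p^{j-1}} - 1$ as it passes through each successive $p$-th power, and the binomial expansion must be controlled precisely enough to see that only the linear term survives modulo $p^{j+2}$ — which is the one place where the hypothesis $p \neq 2$ is used and cannot be dispensed with.
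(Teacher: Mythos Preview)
Your argument is correct and is the standard textbook proof of this classical result: establish cyclicity of $(\mathbb{Z}/p\mathbb{Z})^*$ via the polynomial-root bound in a field, then lift a primitive root to $p^k$ by controlling the $p$-adic valuation of $g^{(p-1)p^{j-1}}-1$ inductively, using $p\mid\binom{p}{2}$ (which is precisely where $p\neq 2$ enters). The details you sketch are sound.

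However, there is nothing to compare against: the paper does not prove Theorem~\ref{thm:Zpk}. It is stated as a background structural result with a citation to~\cite{Rombaldi} and used without proof, exactly as Theorem~\ref{thm:Z2k} is. The paper's own contributions begin in Section~\ref{sec:step1} and rely on these two theorems as known inputs. So your proposal supplies a proof where the paper deliberately defers to the literature; it is correct, but the comparison you were asked to anticipate does not exist.
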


These isomorphisms allow explicit enumeration of the elements of even order, and thus inform the success probability of each iteration of Shor’s algorithm. In Section~\ref{sec3}, we use these results to quantify the conditions under which the algorithm fails or succeeds.

\section{Step 1}\label{sec:step1}

Let \( N \geq 2 \) be a natural number with prime decomposition given by
\[
N = \prod_{i=0}^{l} p_i^{k_i},
\]
where the \( p_i \) are distinct prime numbers and the \( k_i \) are positive integers.

\begin{proposition}[\cite{Lang}]\mbox{}\\
The number of invertible elements in \( \mathbb{Z}/N\mathbb{Z} \), i.e., the number of integers between \( 1 \) and \( N-1 \) that are coprime to \( N \), is given by Euler’s totient function:
\[
\varphi(N) = \prod_{i=0}^{l} (p_i - 1) p_i^{k_i - 1}.
\]
Consequently, the probability that a uniformly randomly chosen integer in \( \{0, 1, \dots, N-1\} \) is coprime to \( N \) is
\[
\frac{\varphi(N)}{N} = \prod_{i=0}^{l} \left(1 - \frac{1}{p_i}\right) = \prod_{i=0}^{l} \frac{p_i - 1}{p_i}.
\]
\end{proposition}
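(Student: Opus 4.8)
The plan is to reduce the count of units modulo $N$ to the prime-power case via the Chinese Remainder Theorem, and then to count directly in the prime-power case. First I would invoke the ring isomorphism $\mathbb{Z}/N\mathbb{Z} \cong \prod_{i=0}^{l} \mathbb{Z}/p_i^{k_i}\mathbb{Z}$, which is valid because the moduli $p_i^{k_i}$ are pairwise coprime. Since an element of a finite direct product of commutative rings is invertible if and only if each of its components is invertible, this restricts to a group isomorphism $(\mathbb{Z}/N\mathbb{Z})^* \cong \prod_{i=0}^{l} (\mathbb{Z}/p_i^{k_i}\mathbb{Z})^*$. Taking cardinalities gives $\varphi(N) = \prod_{i=0}^{l} \varphi(p_i^{k_i})$; that is, $\varphi$ is multiplicative on coprime arguments.

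Next I would compute $\varphi(p^k)$ for a prime $p$ and an integer $k \geq 1$. An element of $\{0, 1, \dots, p^k - 1\}$ fails to be coprime to $p^k$ precisely when it is divisible by $p$, and there are exactly $p^{k-1}$ such multiples in that range; hence $\varphi(p^k) = p^k - p^{k-1} = (p-1)p^{k-1}$. Substituting this into the product formula yields $\varphi(N) = \prod_{i=0}^{l} (p_i - 1) p_i^{k_i - 1}$, which is the first assertion.

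For the probability statement, since the integer is drawn uniformly from the $N$ elements of $\{0, 1, \dots, N-1\}$, the probability that it is coprime to $N$ equals $\varphi(N)/N$. Dividing the product for $\varphi(N)$ by $N = \prod_{i=0}^{l} p_i^{k_i}$ factor by factor gives
\[
\frac{\varphi(N)}{N} = \prod_{i=0}^{l} \frac{(p_i-1) p_i^{k_i-1}}{p_i^{k_i}} = \prod_{i=0}^{l} \frac{p_i - 1}{p_i} = \prod_{i=0}^{l}\left(1 - \frac{1}{p_i}\right),
\]
as claimed.

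The only real point requiring care is the structural fact that the units of a finite product ring are exactly the tuples of units in the factors; this is a short but essential verification, and everything else is elementary counting. I would also note an alternative route that avoids the Chinese Remainder Theorem altogether: an inclusion–exclusion count of the integers in $\{1, \dots, N\}$ divisible by none of the $p_i$ produces $N \sum_{S \subseteq \{0,\dots,l\}} (-1)^{|S|} \prod_{i \in S} \tfrac{1}{p_i} = N \prod_{i=0}^{l}\left(1 - \tfrac{1}{p_i}\right)$ directly, which serves as a useful cross-check on both formulas.
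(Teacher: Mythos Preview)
Your proof is correct and is the standard argument: multiplicativity of $\varphi$ via the Chinese Remainder Theorem, followed by the direct count $\varphi(p^k) = p^k - p^{k-1}$ for prime powers, and then the division by $N$ to obtain the probability. The alternative inclusion--exclusion route you sketch is also valid and yields the same formula.

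There is nothing to compare against here: the paper does not prove this proposition but simply states it with a citation to Lang. Your write-up supplies exactly the kind of self-contained justification one would expect in a textbook treatment, and either of your two approaches would be acceptable.
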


\section{Step 2}\label{sec:step2}

Step 2 consists in counting the invertible elements of \( \mathbb{Z}/N\mathbb{Z} \) of even order.

\subsection{Case \( N = p_1^{k_1} \cdots p_l^{k_l} \), with \( p_i \) odd primes such that \( p_i - 1 = 2^{s_i} m_i \)}\label{subsec:odd-case}

Throughout this section, we fix \( N = p_1^{k_1} \cdots p_l^{k_l} \), where \( p_i \) are distinct odd primes, \( l \geq 1 \), and \( p_i - 1 = 2^{s_i} m_i \) with \( s_i \in \mathbb{N}^* \), \( m_i \in \mathbb{N}^* \), and \( \gcd(m_i, 2) = 1 \) for each \( i \).

\begin{lemma}[Case \( p \neq 2 \), \( N = p^k \), \( p-1 = 2^s m \), with \( \gcd(m, 2) = 1 \)]\label{lem:even-order-p}
Let \( (p - 1) p^{k - 1} = 2^s m p^{k - 1} = 2^s m' \), so that \( \gcd(m, 2) = \gcd(m', 2) = 1 \). The conditional probability that the class of \( a \) in \( \mathbb{Z}/N\mathbb{Z} \) has even order, given that it is invertible, is
\[
1 - \frac{1}{2^s}.
\]
\end{lemma}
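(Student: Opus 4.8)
The plan is to use the group isomorphism from Theorem~\ref{thm:Zpk}, which tells us that $(\mathbb{Z}/p^k\mathbb{Z})^*$ is cyclic of order $n := (p-1)p^{k-1} = 2^s m'$ with $m'$ odd. Since the order of an element is preserved under isomorphism, it suffices to count the elements of even order in the additive cyclic group $\mathbb{Z}/n\mathbb{Z}$ and divide by $n = \varphi(p^k)$.

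First I would recall the standard fact that in a cyclic group of order $n$, an element $x$ corresponding to the residue $j \in \mathbb{Z}/n\mathbb{Z}$ has order exactly $n/\gcd(j,n)$. Hence the order of $x$ is \emph{odd} if and only if $n/\gcd(j,n)$ is odd, i.e. if and only if the full power of $2$ dividing $n$ already divides $\gcd(j,n)$, which (writing $n = 2^s m'$) happens exactly when $2^s \mid j$. The number of such $j$ in $\{0,1,\dots,n-1\}$ is $n/2^s = m'$. Therefore the number of elements of odd order is $m'$, and the proportion of elements of odd order among all $n$ elements is $m'/n = 1/2^s$.

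Consequently the conditional probability that the class of $a$ has even order, given that $a$ is invertible modulo $p^k$, is the complementary proportion
\[
1 - \frac{m'}{n} = 1 - \frac{m'}{2^s m'} = 1 - \frac{1}{2^s},
\]
which is exactly the claimed value; note it is independent of $k$ and of $m$, depending only on $s = v_2(p-1)$.

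I do not anticipate a serious obstacle here: the only slightly delicate point is being careful that ``even order'' is the complement of ``odd order'' within the invertible elements (the identity has order $1$, which is odd, and is correctly counted among the $m'$ elements of odd order), and that the counting of multiples of $2^s$ in a cyclic group of order $2^s m'$ gives exactly $m'$. One should also state clearly that $2^s \mid n$ since $s \geq 1$ and $p-1$ is even, so the exponent of $2$ in $n$ is at least $s$; in fact it is exactly $s$ because $p^{k-1}$ is odd, which is what makes the odd-order count come out to $m'$ rather than something smaller.
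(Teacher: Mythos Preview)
Your proof is correct. Both you and the paper reduce to counting the elements of odd order in the cyclic group of order $n = 2^s m'$ and find there are exactly $m'$ of them; the conclusion $1 - 1/2^s$ follows identically. The only difference is mechanical: the paper splits the cyclic group via the Chinese Remainder Theorem as $\mathbb{Z}/2^s\mathbb{Z} \times \mathbb{Z}/m'\mathbb{Z}$ and observes that an element has odd order iff its $\mathbb{Z}/2^s\mathbb{Z}$-component is zero, whereas you stay in $\mathbb{Z}/n\mathbb{Z}$ and use the formula $\operatorname{ord}(j) = n/\gcd(j,n)$ to see that the odd-order elements are precisely the multiples of $2^s$. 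Your route is slightly more self-contained for this single lemma; the paper's CRT decomposition has the advantage that it is exactly the viewpoint reused in the subsequent propositions for general $N$, where the group is a product of such cyclic factors.
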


\begin{proof}
The additive group \( \mathbb{Z}/(2^s m')\mathbb{Z} \) is isomorphic, via the Chinese Remainder Theorem, to the product group
\[
\mathbb{Z}/2^s\mathbb{Z} \times \mathbb{Z}/m'\mathbb{Z}.
\]
Its group of units \( (\mathbb{Z}/2^s\mathbb{Z})^* \) is isomorphic to a product of cyclic groups whose orders are powers of 2, hence it contains only one element of odd order: the identity. Since \( m' \) is odd, all elements of \( (\mathbb{Z}/m'\mathbb{Z})^* \) have odd order. 

As the order of an element in a product group is the least common multiple of the orders in the components, the group \( (\mathbb{Z}/(2^s m')\mathbb{Z})^* \) contains exactly \( m' \) elements of odd order. Therefore, the number of invertible elements of even order is \( 2^s m' - m' \), and the corresponding probability is
\[
\frac{2^s m' - m'}{2^s m'} = 1 - \frac{1}{2^s}.
\]
\end{proof}

\begin{proposition}\label{prop:even-order-no2}
The conditional probability that a uniformly sampled integer \( a < N \), coprime to \( N \), has even order in \( \mathbb{Z}/N\mathbb{Z} \) is
\[
1 - \frac{1}{2^{s_1}} \cdots \frac{1}{2^{s_l}}.
\]
\end{proposition}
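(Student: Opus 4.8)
The plan is to reduce the claim to the single–prime-power case handled in Lemma~\ref{lem:even-order-p} via the Chinese Remainder Theorem, and then to translate the condition ``$a$ has even order'' into a statement about the individual components. First I would invoke the ring isomorphism
\[
\mathbb{Z}/N\mathbb{Z} \;\cong\; \prod_{i=1}^{l} \mathbb{Z}/p_i^{k_i}\mathbb{Z},
\]
which restricts to a group isomorphism on the units, $(\mathbb{Z}/N\mathbb{Z})^* \cong \prod_{i=1}^{l} (\mathbb{Z}/p_i^{k_i}\mathbb{Z})^*$. Sampling $a$ uniformly among the $\varphi(N)$ invertible classes modulo $N$ corresponds, under this isomorphism, to sampling the component $a_i$ uniformly and independently in each $(\mathbb{Z}/p_i^{k_i}\mathbb{Z})^*$; this independence is the crux of the whole argument.

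Next I would use the standard fact that the order of $a$ in the product group is the least common multiple of the orders of its components $a_i$. Hence the order of $a$ is \emph{odd} if and only if every $a_i$ has odd order in $(\mathbb{Z}/p_i^{k_i}\mathbb{Z})^*$. By independence, the probability of this event factorizes:
\[
\Pr[\text{order of }a\text{ is odd}] \;=\; \prod_{i=1}^{l} \Pr[\text{order of }a_i\text{ is odd}].
\]
By Lemma~\ref{lem:even-order-p}, each factor equals $\tfrac{1}{2^{s_i}}$ (that lemma gives $1-\tfrac{1}{2^{s_i}}$ for the even-order probability in $(\mathbb{Z}/p_i^{k_i}\mathbb{Z})^*$, so the complementary odd-order probability is $\tfrac{1}{2^{s_i}}$; note the exponent of $2$ in $(p_i-1)p_i^{k_i-1}$ is exactly $s_i$ since $p_i$ is odd). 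Taking the complement gives
\[
\Pr[\text{order of }a\text{ is even}] \;=\; 1 - \prod_{i=1}^{l}\frac{1}{2^{s_i}} \;=\; 1 - \frac{1}{2^{s_1}}\cdots\frac{1}{2^{s_l}},
\]
which is the claimed value.

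The only genuinely delicate point — the ``main obstacle'' — is justifying that the components $a_i$ are uniform and independent under uniform sampling of $a$ among units mod $N$; this follows because the CRT isomorphism is a bijection of the unit groups and the product measure on the factors pushes forward to the uniform measure on $(\mathbb{Z}/N\mathbb{Z})^*$, but it deserves to be stated explicitly. Everything else (the $\operatorname{lcm}$ characterization of the order, the complement step, the bookkeeping of the $2$-adic valuations $s_i$) is routine. I would also remark that the formula is consistent with the degenerate reading $l=1$, where it recovers Lemma~\ref{lem:even-order-p}.
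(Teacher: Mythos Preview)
Your argument is correct and follows essentially the same route as the paper: both proofs apply the Chinese Remainder Theorem to reduce $(\mathbb{Z}/N\mathbb{Z})^*$ to the product of the $(\mathbb{Z}/p_i^{k_i}\mathbb{Z})^*$, observe that an element has odd order precisely when every component does, and then invoke the single-prime-power count. The only cosmetic difference is that you phrase the factorization probabilistically (independence of components under uniform sampling) and cite Lemma~\ref{lem:even-order-p} as a black box, whereas the paper unpacks each factor further as $\mathbb{Z}/2^{s_i}\mathbb{Z}\times\mathbb{Z}/m'_i\mathbb{Z}$ and counts the odd-order elements directly; the substance is identical.
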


\begin{proof}
The multiplicative group \( (\mathbb{Z}/N\mathbb{Z})^* \) is isomorphic to the product
\[
(\mathbb{Z}/p_1^{k_1}\mathbb{Z})^* \times \cdots \times (\mathbb{Z}/p_l^{k_l}\mathbb{Z})^*,
\]
which, using previous lemmas, is in turn isomorphic to
\[
\prod_{i=1}^l \mathbb{Z}/2^{s_i}\mathbb{Z} \times \mathbb{Z}/m'_i\mathbb{Z},
\]
where each \( m'_i = m_i p_i^{k_i - 1} \). 

An element in this product group has odd order if and only if each component in the \( \mathbb{Z}/2^{s_i}\mathbb{Z} \) factor is zero. This yields exactly \( \prod_{i=1}^l m'_i \) elements of odd order. Therefore, the probability of drawing an element of even order is
\[
\frac{\prod_{i=1}^l 2^{s_i} m'_i - \prod_{i=1}^l m'_i}{\prod_{i=1}^l 2^{s_i} m'_i} = 1 - \prod_{i=1}^l \frac{1}{2^{s_i}}.
\]
\end{proof}
\subsection{Case \( N = 2^{k_0} p_1^{k_1} \cdots p_l^{k_l} \), with \( p_i \) odd primes such that \( p_i - 1 = 2^{s_i} m_i, k_i\geq1. \)}\label{subsec:even-case}

We now consider the general case where \( N = 2^{k_0} p_1^{k_1} \cdots p_l^{k_l} \), with \( k_0 \geq 1 \), \( l \geq 0 \), and for all \( i \), \( p_i \) are odd primes such that \( p_i - 1 = 2^{s_i} m_i \) with \( \gcd(m_i, 2) = 1 \).

\begin{proposition}\label{prop:even-order-with2}
The probability that an element \( \bar{a} \in (\mathbb{Z}/N\mathbb{Z})^* \) has even order is
\[
\frac{2^{k_0 - 1} \prod_{i=1}^l 2^{s_i} m'_i - \prod_{i=1}^l m'_i}{2^{k_0 - 1} \prod_{i=1}^l 2^{s_i} m'_i} = 1 - \frac{1}{2^{k_0 - 1}} \prod_{i=1}^l \frac{1}{2^{s_i}}.
\]
\end{proposition}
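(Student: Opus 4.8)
The plan is to reduce the general case $N = 2^{k_0} p_1^{k_1} \cdots p_l^{k_l}$ to the odd case already handled in Proposition~\ref{prop:even-order-no2}, the only new ingredient being the structure of $(\mathbb{Z}/2^{k_0}\mathbb{Z})^*$ supplied by Theorem~\ref{thm:Z2k}. First I would invoke the Chinese Remainder Theorem to write
\[
(\mathbb{Z}/N\mathbb{Z})^* \cong (\mathbb{Z}/2^{k_0}\mathbb{Z})^* \times (\mathbb{Z}/p_1^{k_1}\mathbb{Z})^* \times \cdots \times (\mathbb{Z}/p_l^{k_l}\mathbb{Z})^*,
\]
and then, using Theorem~\ref{thm:Zpk}, replace each odd-prime factor by $\mathbb{Z}/2^{s_i}\mathbb{Z} \times \mathbb{Z}/m'_i\mathbb{Z}$ with $m'_i = m_i p_i^{k_i-1}$, exactly as in the proof of Proposition~\ref{prop:even-order-no2}. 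Since the order of an element in a direct product is the lcm of the orders of its components, an element has odd order if and only if every component lying in a factor of even order is trivial; the odd cyclic factors $\mathbb{Z}/m'_i\mathbb{Z}$ impose no constraint and simply contribute a multiplicative factor of $\prod_i m'_i$ to the count of odd-order elements.

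The one genuinely new point is counting the odd-order elements of $(\mathbb{Z}/2^{k_0}\mathbb{Z})^*$, and here I would split on $k_0$. For $k_0 = 1$ the group is trivial, so it has exactly $1$ odd-order element out of $1$; for $k_0 = 2$ it is $\mathbb{Z}/2\mathbb{Z}$, with exactly $1$ odd-order element (the identity) out of $2$; for $k_0 \geq 3$, Theorem~\ref{thm:Z2k} gives $\mathbb{Z}/2\mathbb{Z} \times \mathbb{Z}/2^{k_0-2}\mathbb{Z}$, a $2$-group of order $2^{k_0-1}$, so again the only odd-order element is the identity. In all three cases the number of odd-order elements of $(\mathbb{Z}/2^{k_0}\mathbb{Z})^*$ is $1$ and the group has order $2^{k_0-1}$ (reading $2^{k_0-1}$ as $1$ when $k_0 = 1$), so the formula $1 - 2^{-(k_0-1)} \prod_i 2^{-s_i}$ specializes correctly, including the edge case $k_0 = 1$ where it reproduces Proposition~\ref{prop:even-order-no2}.

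Combining the two observations: the total number of odd-order elements of $(\mathbb{Z}/N\mathbb{Z})^*$ is $1 \cdot \prod_{i=1}^l m'_i = \prod_{i=1}^l m'_i$, while the whole group has order $\varphi(N) = 2^{k_0-1} \prod_{i=1}^l 2^{s_i} m'_i$. The probability that a uniformly sampled element of $(\mathbb{Z}/N\mathbb{Z})^*$ has even order is therefore
\[
1 - \frac{\prod_{i=1}^l m'_i}{2^{k_0-1} \prod_{i=1}^l 2^{s_i} m'_i} = 1 - \frac{1}{2^{k_0-1}} \prod_{i=1}^l \frac{1}{2^{s_i}},
\]
which is the claimed identity. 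I expect no real obstacle here; the only thing to be careful about is the bookkeeping for small $k_0$ (especially $k_0 = 1$, where $(\mathbb{Z}/2\mathbb{Z})^*$ is trivial and the "$2^{k_0-1}$" must be read as $1$), so I would state that convention explicitly and note that the case $l = 0$ (i.e. $N$ a power of $2$) is covered as well, giving probability $1 - 2^{-(k_0-1)}$.
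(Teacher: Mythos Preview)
Your proof is correct and follows essentially the same line as the paper's: CRT decomposition of $(\mathbb{Z}/N\mathbb{Z})^*$, the observation that odd order in a product means odd order in every factor, and the fact that $(\mathbb{Z}/2^{k_0}\mathbb{Z})^*$ is a $2$-group so its only odd-order element is the identity. Your explicit case split on $k_0 \in \{1,2,\geq 3\}$ is slightly more detailed than the paper's (which simply asserts that every non-identity element of $(\mathbb{Z}/2^{k_0}\mathbb{Z})^*$ has even order), and your remark that $k_0=1$ recovers Proposition~\ref{prop:even-order-no2} is a nice consistency check the paper does not make, but these are cosmetic differences rather than a different approach.
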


\begin{proof}
We have the isomorphism:
\[
(\mathbb{Z}/N\mathbb{Z})^* \cong (\mathbb{Z}/2^{k_0}\mathbb{Z})^* \times \prod_{i=1}^l (\mathbb{Z}/p_i^{k_i}\mathbb{Z})^*.
\]
An element of this product group has odd order only if all components have odd order. In the factor \( (\mathbb{Z}/2^{k_0}\mathbb{Z})^* \), the order of every non-identity element is divisible by 2, and only the identity has odd order. Thus, this factor contributes only one element of odd order.

For the remaining components, the previous computation applies. Consequently, the total number of invertible elements with odd order is \( \prod_{i=1}^l m'_i \), and the probability that an element of \( (\mathbb{Z}/N\mathbb{Z})^* \) has even order is
\[
\frac{2^{k_0 - 1} \prod_{i=1}^l 2^{s_i} m'_i - \prod_{i=1}^l m'_i}{2^{k_0 - 1} \prod_{i=1}^l 2^{s_i} m'_i} = 1 - \frac{1}{2^{k_0 - 1}} \prod_{i=1}^l \frac{1}{2^{s_i}}.
\]

We verify that this formula remains valid in the case \( l = 0 \): then \( N = 2^{k_0} \), and the group \( (\mathbb{Z}/2^{k_0}\mathbb{Z})^* \) has \( 2^{k_0 - 1} \) elements, among which only the identity has odd order. Hence, the probability that a randomly chosen invertible element has even order is
\[
1 - \frac{1}{2^{k_0 - 1}},
\]
as expected.
\end{proof}

\section{Step 3}\label{sec3}

\begin{proposition}[Case $N = p^k$, $p \neq 2$, $p - 1 = 2^s$, equation $a^2 = 1$]\mbox{}\\
Let $p \neq 2$ be a prime number, $N = p^k$ with $k \geq 1$, and assume $p - 1 = 2^s$, hence $s \geq 1$.\\
Solving the equation $a^2 = 1$ in the group $\left((\mathbb{Z}/p^k\mathbb{Z})^*, \times\right)$ is equivalent to solving $2a = 0$ in the group $\left(\mathbb{Z}/(p-1)p^{k-1}\mathbb{Z}, +\right) \simeq \left(\mathbb{Z}/2^s\mathbb{Z} \times \mathbb{Z}/p^{k-1}\mathbb{Z}, +\right)$, since $2$ and $p$ are coprime.\\
As $p^k$ is odd, the equation $(2x, 2y) = (0,0)$ holds if and only if $x = 2^{s-1}u$ with $u \in \{0,1\}$, since $2$ is invertible in $\mathbb{Z}/p^{k-1}\mathbb{Z}$. Therefore, $x = 0$ or $x = 2^{s-1}$ and $y = 0$. The solution $(0,0)$ corresponds to $1$ in $\left((\mathbb{Z}/p^k\mathbb{Z})^*, \times\right)$, and the solution $(2^{s-1}, 0)$ corresponds to $-1$.
\end{proposition}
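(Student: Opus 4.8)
\emph{Proof proposal.}

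The plan is to transport the equation $a^2 = 1$ through the isomorphism of Theorem~\ref{thm:Zpk} and then split the resulting additive equation with the Chinese Remainder Theorem. First I would invoke Theorem~\ref{thm:Zpk} to fix a group isomorphism $\psi \colon \left((\mathbb{Z}/p^k\mathbb{Z})^*,\times\right) \xrightarrow{\ \sim\ } \left(\mathbb{Z}/(p-1)p^{k-1}\mathbb{Z},+\right)$. Since $\psi$ is a morphism, $a^2 = 1$ holds in the source precisely when $2\alpha = 0$ holds in the target, where $\alpha = \psi(a)$. Because $p$ is odd we have $\gcd(p-1,p^{k-1}) = \gcd(2^s,p^{k-1}) = 1$, so the Chinese Remainder Theorem supplies a further isomorphism $\mathbb{Z}/(p-1)p^{k-1}\mathbb{Z} \cong \mathbb{Z}/2^s\mathbb{Z} \times \mathbb{Z}/p^{k-1}\mathbb{Z}$, under which $\alpha$ corresponds to a pair $(x,y)$ and the equation reads $(2x,2y) = (0,0)$.

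Next I would solve the two coordinate equations separately. In $\mathbb{Z}/p^{k-1}\mathbb{Z}$ the element $2$ is a unit (again because $p$ is odd), so $2y = 0$ forces $y = 0$; when $k = 1$ this factor is trivial and there is nothing to check there. In $\mathbb{Z}/2^s\mathbb{Z}$ the equation $2x = 0$ means $2^s \mid 2x$, i.e.\ $2^{s-1} \mid x$, which — using $s \geq 1$, valid since $p \geq 3$ — has exactly the two solutions $x = 0$ and $x = 2^{s-1}$. Hence $2\alpha = 0$ has precisely two solutions, corresponding to the pairs $(0,0)$ and $(2^{s-1},0)$.

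It then remains to identify these two classes back in $(\mathbb{Z}/p^k\mathbb{Z})^*$. The pair $(0,0)$ is the neutral element of the additive group, so $\psi^{-1}(0,0) = 1$. For the other: $(2^{s-1},0)$ is a nonzero element annihilated by multiplication by $2$, hence has additive order exactly $2$, so $\psi^{-1}(2^{s-1},0)$ is an element of multiplicative order $2$ in the cyclic group $(\mathbb{Z}/p^k\mathbb{Z})^*$. Since $p^k \geq 3$ we have $-1 \not\equiv 1 \pmod{p^k}$ while $(-1)^2 \equiv 1$, so $-1$ also has order $2$; but a cyclic group contains at most one element of order $2$, whence $\psi^{-1}(2^{s-1},0) = -1$.

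The only genuinely delicate point is this final identification: knowing that the second solution has order $2$ does not by itself name it $-1$, and the argument rests on the cyclicity of $(\mathbb{Z}/p^k\mathbb{Z})^*$ from Theorem~\ref{thm:Zpk} to guarantee uniqueness of the order-$2$ element, together with the elementary observation that $-1$ is a true involution once $p^k > 2$. Everything else is routine bookkeeping with the Chinese Remainder Theorem and the invertibility of $2$ modulo an odd prime power.
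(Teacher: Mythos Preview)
Your proposal is correct and follows exactly the route taken in the paper: transport via Theorem~\ref{thm:Zpk}, split with the Chinese Remainder Theorem, solve the two coordinate equations using invertibility of $2$ modulo the odd factor, and identify the two solutions with $\pm 1$. The only difference is that you supply an explicit justification for why $(2^{s-1},0)$ corresponds to $-1$ (uniqueness of the order-$2$ element in a cyclic group), whereas the paper simply asserts this correspondence; your added argument is sound and arguably makes the proof more self-contained.
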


We can now identify cases where Shor’s algorithm will never succeed.

\begin{proposition}[Case $N = p^k$, $k \geq 1$, $p \neq 2$, $p - 1 = 2^s$, $s \geq 1$]\mbox{}\\
Let $a$ be an element of even order $r$. As shown above, the equation $x^2 = 1$ has only two solutions: $1$ and $-1$. Since $(a^{r/2})^2 = 1$ and $a^{r/2} \neq 1$ (as $a$ has order $r$), it must be that $a^{r/2} = -1$. Hence, Shor’s algorithm will never succeed in this case.
\end{proposition}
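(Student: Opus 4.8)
The plan is to leverage the preceding proposition, which already established that in $\left((\mathbb{Z}/p^k\mathbb{Z})^*,\times\right)$ the only solutions of $x^2 = 1$ are $1$ and $-1$, under the standing hypotheses $p\neq 2$, $k\geq 1$, $p-1 = 2^s$ with $s\geq 1$. So the remaining work is purely a short group-theoretic deduction: I would take an arbitrary $a$ of even order $r$ (such elements exist by Proposition~\ref{prop:even-order-no2}, since $s\geq 1$), form $b := a^{r/2}$, and observe that $b$ is a well-defined element of the group because $r$ is even.

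First I would note that $b^2 = a^r = 1$, so $b$ is one of the two square roots of unity identified above, i.e.\ $b \in \{1, -1\}$. Next I would rule out $b = 1$: if $a^{r/2} = 1$, then $r/2$ would be a period of $a$ strictly smaller than $r$, contradicting the minimality of $r$ as the order of $a$ (here one uses $r/2 \geq 1$, which holds since $r$ is even and positive). Hence $b = a^{r/2} = -1$ necessarily. Finally I would connect this to the algorithm: in Step 3 of Shor's procedure, success requires both that $r$ be even \emph{and} that $a^{r/2}\not\equiv -1 \pmod N$; since every $a$ of even order forces $a^{r/2} \equiv -1$, and every $a$ of odd order is discarded at the parity check, no choice of $a$ coprime to $N = p^k$ can ever pass Step 3. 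Therefore Shor's algorithm never returns a nontrivial factor in this case.

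There is essentially no hard step here — the statement is a corollary of the preceding proposition plus the definition of order. The only point requiring a word of care is the legitimacy of writing $a^{r/2}$ (needs $r$ even, which is the hypothesis) and the strict inequality $r/2 < r$ used to contradict minimality (needs $r \geq 1$, equivalently $r > 0$, which holds for any order). One might also add a remark that this does \emph{not} mean the integer $N = p^k$ is unfactorable by other means — it simply means this particular randomized subroutine has success probability zero for such $N$ — but that is commentary rather than part of the proof.
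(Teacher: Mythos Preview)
Your proof is correct and follows exactly the same approach as the paper: the proposition in the paper is self-contained (its statement \emph{is} the proof), and you have simply expanded the same three-line argument --- $a^{r/2}$ squares to $1$, it cannot equal $1$ by minimality of the order, hence it equals $-1$ --- with a bit more care about the legitimacy of each step. There is no substantive difference.
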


\begin{remark}
The primes described in the previous proposition are precisely the Fermat primes. It is conjectured that only finitely many exist, and currently only five Fermat primes are known.
\end{remark}

\begin{proposition}[Case $N = p^k$, $k \geq 1$, $p \neq 2$, $p - 1 = 2^s m$, $s \geq 1$, $m \geq 3$, equation $a^2 = 1$]\mbox{}\\
Solving the equation $a^2 = 1$ in $\left((\mathbb{Z}/p^k\mathbb{Z})^*, \times\right)$ is equivalent to solving $2a = 0$ in $\left(\mathbb{Z}/(p-1)p^{k-1}\mathbb{Z}, +\right) \simeq \left(\mathbb{Z}/2^s\mathbb{Z} \times \mathbb{Z}/m\mathbb{Z} \times \mathbb{Z}/p^{k-1}\mathbb{Z}, +\right)$, as $2$, $m$, and $p$ are pairwise coprime.\\
Since both $m$ and $p^k$ are odd, the condition $(2x, 2y, 2z) = (0,0,0)$ holds if and only if $x = 2^{s-1}u$ for $u \in \{0,1\}$, $y = 0$, and $z = 0$, since $2$ is invertible in both $\mathbb{Z}/m\mathbb{Z}$ and $\mathbb{Z}/p^{k-1}\mathbb{Z}$. Therefore, $x = 0$ or $x = 2^{s-1}$, $y = 0$, and $z = 0$. The solution $(0,0,0)$ corresponds to $1$ in $\left((\mathbb{Z}/p^k\mathbb{Z})^*, \times\right)$, and the solution $(2^{s-1},0,0)$ corresponds to $-1$.
\end{proposition}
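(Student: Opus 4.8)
The plan is to translate the multiplicative equation into an additive one using the structure theorem, then reduce to a torsion count. By Theorem~\ref{thm:Zpk} there is a group isomorphism $\phi$ from $\left((\mathbb{Z}/p^k\mathbb{Z})^*,\times\right)$ onto $\left(\mathbb{Z}/(p-1)p^{k-1}\mathbb{Z},+\right)$. Since $\phi$ carries products to sums, $a^2=1$ holds if and only if $2\phi(a)=0$; hence the solution set of $a^2=1$ is precisely $\phi^{-1}$ of the $2$-torsion subgroup of the cyclic group $\mathbb{Z}/(p-1)p^{k-1}\mathbb{Z}$, and it suffices to describe that subgroup.

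To do so I would first record the coprimality facts needed for the Chinese Remainder Theorem: writing $(p-1)p^{k-1}=2^s m p^{k-1}$ with $s\geq 1$ and $m\geq 3$ odd, the integers $2^s$, $m$, $p^{k-1}$ are pairwise coprime — in particular $p\nmid m$ because $m\mid p-1$ and $\gcd(p,p-1)=1$ — so $\mathbb{Z}/(p-1)p^{k-1}\mathbb{Z}\cong \mathbb{Z}/2^s\mathbb{Z}\times\mathbb{Z}/m\mathbb{Z}\times\mathbb{Z}/p^{k-1}\mathbb{Z}$. The $2$-torsion of a finite product is the product of the $2$-torsions; multiplication by $2$ is a bijection on $\mathbb{Z}/m\mathbb{Z}$ and on $\mathbb{Z}/p^{k-1}\mathbb{Z}$ (both of odd order), so their $2$-torsion is trivial, whereas the $2$-torsion of $\mathbb{Z}/2^s\mathbb{Z}$ is $\{0,2^{s-1}\}$. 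Thus the solution set on the additive side is exactly $\{(0,0,0),(2^{s-1},0,0)\}$. Alternatively, one may skip the CRT decomposition and simply note that in a cyclic group $\mathbb{Z}/n\mathbb{Z}$ the equation $2x=0$ has $\gcd(2,n)$ solutions, which is $2$ here since $n$ is even.

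It then remains to match these two additive solutions with the claimed elements $1$ and $-1$ of $(\mathbb{Z}/p^k\mathbb{Z})^*$. The element $0$ corresponds to the identity $1$. For the nonzero solution, observe that $-1$ has multiplicative order $2$ in $(\mathbb{Z}/p^k\mathbb{Z})^*$ since $-1\neq 1$ (as $p^k>2$); therefore $\phi(-1)$ is the unique element of additive order $2$ in the cyclic group, namely $\frac{(p-1)p^{k-1}}{2}=2^{s-1}m p^{k-1}$, whose image under the CRT isomorphism is $(2^{s-1},0,0)$ because $m p^{k-1}$ is odd. This is the second solution, which completes the argument; the case $k=1$ is covered with no change, the factor $\mathbb{Z}/p^{k-1}\mathbb{Z}$ being then trivial.

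No step here is genuinely hard once Theorem~\ref{thm:Zpk} is available; the only places that warrant an explicit line are the verification $p\nmid m$ that legitimizes the CRT splitting, and the remark that the unique nonzero $2$-torsion element has to be $\phi(-1)$ — which lets us identify it without ever computing $\phi$ explicitly.
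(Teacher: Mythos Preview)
Your argument is correct and follows essentially the same route as the paper: transport $a^2=1$ to $2a=0$ via Theorem~\ref{thm:Zpk}, split by CRT into $\mathbb{Z}/2^s\mathbb{Z}\times\mathbb{Z}/m\mathbb{Z}\times\mathbb{Z}/p^{k-1}\mathbb{Z}$, and read off the two $2$-torsion elements. Your write-up is in fact a bit more careful than the paper's, since you explicitly justify $p\nmid m$ and argue why the nonzero torsion element must be $\phi(-1)$ rather than merely asserting the correspondence.
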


In this case as well, Shor’s algorithm will never succeed.

\begin{proposition}[Case $N = 2^k$, $k \geq 3$, equation $a^2 = 1$]\mbox{}\\
In $\left((\mathbb{Z}/2^k\mathbb{Z})^*, \times\right) \simeq \left(\mathbb{Z}/2\mathbb{Z} \times \mathbb{Z}/2^{k-2}\mathbb{Z}, +\right)$, the equation $a^2 = 1$ has four solutions: $1 \simeq (0,0)$, $5^{2^{k-3}} \simeq (0, 2^{k-3})$, $-1 \simeq (1,0)$, and $-5^{2^{k-3}} \simeq (1, 2^{k-3})$ \cite{Rombaldi}. Shor’s algorithm succeeds if and only if $a^{r/2}$ equals either $5^{2^{k-3}}$ or $-5^{2^{k-3}}$.\\
The probability that a randomly chosen integer $a < N$ coprime to $2$ is $\frac{1}{2}$. Among these, the probability that $a$ has odd order (i.e., corresponds to $(0,0)$) is $\frac{1}{2^{k-1}}$. Hence, the probability that $a$ has even order is $1 - \frac{1}{2^{k-1}}$. Therefore, the probability that a random $a < N$, coprime to $2$, has even order is $\frac{1}{2}\left(1 - \frac{1}{2^{k-1}}\right)$. Finally, the probability that such an $a$ has even order and satisfies $a^{r/2} \neq -1$ is $\frac{1}{2}\left(1 - \frac{1}{2^{k-2}}\right)$.
\end{proposition}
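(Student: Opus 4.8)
The plan is to carry out everything inside the explicit form of the isomorphism in Theorem~\ref{thm:Z2k}: take as generators $\overline{-1}$, of order $2$, and $\overline{5}$, of order $2^{k-2}$ (this order being the content of \cite{Rombaldi}), so that every unit is uniquely $(-1)^{u}5^{v}$, identified with $(u,v)\in\mathbb{Z}/2\mathbb{Z}\times\mathbb{Z}/2^{k-2}\mathbb{Z}$. Under this identification $a\mapsto a^{2}$ becomes multiplication by $2$, so $a^{2}=1$ becomes $2(u,v)=(0,0)$, i.e. $2v\equiv 0\pmod{2^{k-2}}$ with $u$ free; hence $v\in\{0,2^{k-3}\}$, which yields the four stated solutions $(0,0)=1$, $(0,2^{k-3})=5^{2^{k-3}}$, $(1,0)=-1$, $(1,2^{k-3})=-5^{2^{k-3}}$. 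For the success criterion I would use the condition in Algorithm~\ref{algo-shor}: an iteration succeeds exactly when $r$ is even and $a^{r/2}\not\equiv-1$. When $r$ is even, $(a^{r/2})^{2}=a^{r}=1$ and $a^{r/2}\neq1$ because $0<r/2<r=\operatorname{ord}(a)$, so $a^{r/2}$ is one of the three nontrivial square roots of $1$; therefore success is precisely $a^{r/2}\in\{5^{2^{k-3}},-5^{2^{k-3}}\}$.

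Next come the three counting statements, all immediate from results already established. The density of integers coprime to $2$ is $\varphi(2^{k})/2^{k}=2^{k-1}/2^{k}=\tfrac12$ by the Proposition of Section~\ref{sec:step1}. Among the $2^{k-1}$ units, only $(0,0)=1$ has odd order, since any nonzero pair $(u,v)$ has order a power of $2$ that is at least $2$; this is the $l=0$ case of Proposition~\ref{prop:even-order-with2} with $k_{0}=k$. Hence the conditional probability of odd order is $1/2^{k-1}$, that of even order is $1-1/2^{k-1}$, and multiplying by the density $\tfrac12$ of coprime integers gives $\tfrac12\bigl(1-1/2^{k-1}\bigr)$ for the joint event ``coprime to $2$ and of even order''.

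The last formula is the substantive part, and here I would determine exactly which units of even order satisfy $a^{r/2}=-1$. For a unit $a=(-1)^{u}5^{v}$ with $v\neq0$, let $2^{j}$ be the order of $v$ in $\mathbb{Z}/2^{k-2}\mathbb{Z}$ (so $j\ge1$); then $r=\operatorname{lcm}\bigl(2^{[u\neq0]},2^{j}\bigr)=2^{j}$, so $r/2=2^{j-1}$ and $a^{r/2}=\bigl(2^{j-1}u\bmod 2,\ 2^{j-1}v\bmod 2^{k-2}\bigr)$. The second coordinate $2^{j-1}v$ has order $2$ in $\mathbb{Z}/2^{k-2}\mathbb{Z}$ (because $v$ has order $2^{j}$), hence equals $2^{k-3}\neq0$, so $a^{r/2}\neq-1=(1,0)$; explicitly $a^{r/2}=5^{2^{k-3}}$ when $j\ge2$ and $a^{r/2}=a=(-1)^{u}5^{2^{k-3}}$ when $j=1$ (the case $v=2^{k-3}$, $r/2=1$). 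The only remaining units have $v=0$, i.e. $a=\pm1$: here $a=1$ has odd order and is excluded, while $a=-1$ has $r=2$ and $a^{r/2}=-1$. Thus $a=-1$ is the unique unit of even order with $a^{r/2}=-1$, so $2^{k-1}-2$ of the $2^{k-1}$ units have even order and $a^{r/2}\neq-1$; the conditional probability is $1-1/2^{k-2}$, and multiplying by $\tfrac12$ gives $\tfrac12\bigl(1-1/2^{k-2}\bigr)$.

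The main obstacle is precisely this last step: one must show that $a^{r/2}$ always lands in the cyclic factor generated by $\overline{5}$ unless $a=-1$ itself, which requires tracking the $2$-adic valuation of $v$ through the least-common-multiple computation of $r$ and handling the boundary value $j=1$ (where $r/2=1$, so $a^{r/2}=a$ can have nontrivial first coordinate but still has second coordinate $2^{k-3}$) separately. Once that count is secured, the four probability identities are purely arithmetic.
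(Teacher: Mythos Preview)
Your proof is correct and follows essentially the same route as the paper: both work in the additive model $\mathbb{Z}/2\mathbb{Z}\times\mathbb{Z}/2^{k-2}\mathbb{Z}$, enumerate the four square roots of $1$, and then count the units of even order with $a^{r/2}=-1$. The only stylistic difference is in that last count: where you compute $a^{r/2}$ case by case (splitting on $v\neq0$ versus $v=0$ and tracking the $2$-adic order of $v$), the paper instead imposes the constraint $(r/2)(x,y)=(1,0)$ directly, notes that $(r/2)x\equiv1\pmod 2$ forces $r/2$ odd and hence $r=2$ (since $r$ is a power of $2$), whence $y=(r/2)y=0$ and $(x,y)=(1,0)=-1$ is the unique solution---a slightly quicker path to the same conclusion.
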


\begin{proof}
Let $a \simeq (x, y)$. Then $a^2 = 1$ if and only if $(2x, 2y) = (0, 0)$, which implies that $x$ is arbitrary and $y = 0$ or $y = 2^{k-3}$. This yields four solutions: $1 \simeq (0,0)$, $5^{2^{k-3}} \simeq (0, 2^{k-3})$, $-1 \simeq (1,0)$, and $-5^{2^{k-3}} \simeq (1, 2^{k-3})$.\\
The case where $r(x, y) = (0,0)$ and $(r/2)(x,y) = (1,0)$ corresponds to $x = 1$, $r$ even, and $r/2$ odd, thus $r = 2$, since $r$ is a power of $2$, and $(r/2)y = 0$ implies $y = 0$. The unique solution is then $(x, y) = (1, 0)$, which corresponds to $-1$ in the group.\\
The group $\left((\mathbb{Z}/2^k\mathbb{Z})^*, \times\right)$ has $2^{k-1}$ elements coprime to $2$, hence invertible. Only one of them has odd order: the identity. Since the group has order a power of $2$, all element orders are powers of $2$. Moreover, only one element $-1$ satisfies $a^{r/2} = -1$, as previously shown. The stated probabilities follow.
\end{proof}

\begin{proposition}[Case $N = 2^k$, $k = 1$ or $2$]
If $N = 2^k$ with $k = 1$, the ring $\mathbb{Z}/2\mathbb{Z}$ contains only one invertible element, namely $1$, which has odd order. Shor’s algorithm halts after the first step.

If $k = 2$, then $\mathbb{Z}/4\mathbb{Z}$ has two invertible elements: $1$ of order $1$ (odd), and $3 = -1$, which has order $2$ and satisfies $3^1 = -1$. Shor’s algorithm halts after the second step.
\end{proposition}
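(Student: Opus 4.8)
The statement concerns only the two smallest powers of $2$, so the plan is to enumerate the unit group in each case via Theorem~\ref{thm:Z2k} and then trace Algorithm~\ref{algo-shor} through its branches; there is no computation of substance here, only bookkeeping.

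For $k=1$, note that $N=2$ is in fact prime, so it lies outside the nominal scope of the algorithm, and the selection range $\{2,\dots,N-1\}$ is empty, so no loop iteration can even start. Consistently with this, the first bullet of Theorem~\ref{thm:Z2k} gives $(\mathbb{Z}/2\mathbb{Z})^*\cong(\{0\},+)$, so $\varphi(2)=1$ and the unique unit $\bar 1$ has order $1$. In particular no unit has even order, so Steps~2 and~3 are vacuous: the algorithm cannot proceed beyond Step~1.

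For $k=2$ we have $\varphi(4)=2$, and by the second bullet of Theorem~\ref{thm:Z2k}, $(\mathbb{Z}/4\mathbb{Z})^*\cong\mathbb{Z}/2\mathbb{Z}=\{\bar 1,\bar 3\}$ with $\bar 3=\overline{-1}$. The unit $\bar 1$ has odd order $1$ and is discarded by the parity test of Step~3; the unit $\bar 3$ has even order $r=2$, but $3^{r/2}=3\equiv-1\pmod 4$, so it fails the Step~3 condition $a^{r/2}\not\equiv-1$. Hence for no coprime $a$ does the algorithm reach the $\gcd(a^{r/2}\pm1,N)$ computation, and the order-finding route contributes zero success probability; the deepest the algorithm ever gets (for the choice $a=3$) is completing the order-finding of Step~2 before it loops. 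A nontrivial factor is returned only through the $\gcd(a,N)$ test of Step~1, e.g. with $a=2$, where $\gcd(2,4)=2$.

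The only point needing care is the reading of the phrase ``halts after step $i$'': it should be made explicit that the claim is about where the algorithm stalls — that the even-order/post-processing branch can never succeed for these $N$ — rather than about any divisibility identity. This is the degenerate base case of the analysis carried out in the preceding propositions, and presents no genuine obstacle.
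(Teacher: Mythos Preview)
Your proposal is correct and follows essentially the same approach as the paper: the paper gives no separate proof for this proposition, since the statement itself is the enumeration of the unit groups of $\mathbb{Z}/2\mathbb{Z}$ and $\mathbb{Z}/4\mathbb{Z}$, and you reproduce exactly that enumeration (with the added citation to Theorem~\ref{thm:Z2k}). One small terminological slip: in the paper's numbering the parity test on $r$ is Step~2, not Step~3, so your phrase ``discarded by the parity test of Step~3'' should read Step~2; this is consistent with the later Corollary, which records that $N=2$ fails at Step~2 and $N=4$ at Step~3.
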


\begin{theorem}
Let $N = 2^{k_0} \times p_1^{k_1} \times \cdots \times p_l^{k_l}$ with $k_0 \geq 2$, $l \geq 1$, and for all $1 \leq i \leq l$, $p_i \neq 2$. The conditional probability that an element $a \in (\mathbb{Z}/N\mathbb{Z})^*$ of even order satisfies $a^{r/2} \not\equiv -1 \mod N$ is:
\[
\frac{2^{k_0-1}  2^{s_1} \cdots 2^{s_l} - 2}{2^{k_0-1}  2^{s_1} \cdots 2^{s_l} - 1}.
\]
\end{theorem}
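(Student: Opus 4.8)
The plan is to work entirely inside the additive model of the unit group provided by the Chinese Remainder Theorem together with Theorems~\ref{thm:Z2k} and~\ref{thm:Zpk}. Writing $p_i - 1 = 2^{s_i} m_i$ with $m_i$ odd and $m'_i = m_i p_i^{k_i-1}$, and using $k_0 \geq 2$, I would fix the isomorphism
\[
(\mathbb{Z}/N\mathbb{Z})^* \;\cong\; \mathbb{Z}/2\mathbb{Z} \times \mathbb{Z}/2^{k_0-2}\mathbb{Z} \;\times\; \prod_{i=1}^l \bigl(\mathbb{Z}/2^{s_i}\mathbb{Z} \times \mathbb{Z}/m'_i\mathbb{Z}\bigr),
\]
with the convention that the $\mathbb{Z}/2^{k_0-2}\mathbb{Z}$ factor is trivial when $k_0 = 2$. (Here I should double-check the $k_0=2$ boundary directly against the statement, since the displayed fraction has no $m'_i$-dependence and must degenerate correctly.) The element $-1$ corresponds in this model to a specific nonzero $2$-torsion element $\varepsilon$ whose components I would pin down: $1$ in the $\mathbb{Z}/2\mathbb{Z}$ factor, $0$ in the $\mathbb{Z}/2^{k_0-2}\mathbb{Z}$ factor, $2^{s_i-1}$ in each $\mathbb{Z}/2^{s_i}\mathbb{Z}$ factor, and $0$ in each $\mathbb{Z}/m'_i\mathbb{Z}$ factor — this is exactly the pattern already extracted in the single-prime-power propositions above, reassembled componentwise.

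The combinatorial heart of the argument is to count, among the elements $a$ of even order, how many satisfy $a^{r/2} = \varepsilon$ (equivalently $(r/2)\,a = \varepsilon$ in additive notation), and subtract. Since the order $r$ of $a$ is the lcm of the component orders and $a^{r/2}$ is an involution, $a^{r/2} = \varepsilon$ forces, component by component: the order of the component of $a$ in each odd-order factor ($\mathbb{Z}/m'_i\mathbb{Z}$) must be odd and hence $(r/2)\cdot(\text{that component}) = 0$ automatically, while in each $2$-group factor the component must be such that doubling it $r/2$ times lands on the prescribed coordinate of $\varepsilon$. The key observation — the same one driving the $N=2^k$ proposition — is that $r$ is a power of $2$ here, and $a^{r/2} = \varepsilon \neq 1$ together with $r$ being the exact order forces the $2$-part of $a$ to have order exactly $r$ with $(r/2)a$ equal to the unique order-$2$ element dictated by $\varepsilon$ in each cyclic $2$-factor. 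I would argue that this pins the $2$-part of $a$ to a single element per choice of the odd part, so the number of $a$ with $a^{r/2}=\varepsilon$ equals the number of possible odd parts, namely $\prod_{i=1}^l m'_i$ — precisely one more than the count of odd-order elements, reflecting the extra factor of $2$ coming from $\varepsilon$ itself being hit.

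Assembling: by Proposition~\ref{prop:even-order-with2} the number of even-order elements is $2^{k_0-1}\prod_i 2^{s_i} m'_i - \prod_i m'_i$, and the number of bad ones ($a^{r/2}\equiv -1$) is $\prod_i m'_i$; the conditional probability of success is therefore
\[
\frac{2^{k_0-1}\prod_{i=1}^l 2^{s_i} m'_i - 2\prod_{i=1}^l m'_i}{2^{k_0-1}\prod_{i=1}^l 2^{s_i} m'_i - \prod_{i=1}^l m'_i}
= \frac{2^{k_0-1} 2^{s_1}\cdots 2^{s_l} - 2}{2^{k_0-1} 2^{s_1}\cdots 2^{s_l} - 1},
\]
after dividing numerator and denominator by $\prod_i m'_i$. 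The main obstacle I anticipate is the clean justification that $\{\,a : a^{r/2} = \varepsilon\,\}$ has exactly $\prod_i m'_i$ elements: one must show that for each admissible odd part the $2$-part is uniquely determined (not merely constrained), which requires care about how the exponent $r/2$ interacts simultaneously with several $2$-group factors of different sizes $2^{s_i}$ and the factor $2^{k_0-2}$, since $r$ is a common value across all components. I would handle this by observing that $r$ equals $2^{t}$ where $2^{t-1}$ is the largest power of $2$ appearing as a component order, and then checking that the coordinate of $\varepsilon$ in each $2$-factor (which is either $0$ or the unique involution) forces that factor's component of $a$ to be one specific element; the coordinates that are $0$ in $\varepsilon$ force the corresponding component of $a$ to have order dividing $2^{t-1}$ in a way consistent with the global $r$, and a short case check closes it.
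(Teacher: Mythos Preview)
Your overall architecture is the same as the paper's: pass to the additive model, identify $-1$ with the element $\varepsilon = (1,0;2^{s_1-1},0;\ldots;2^{s_l-1},0)$, count the ``bad'' elements with $(r/2)a = \varepsilon$, and divide by the even-order count from Proposition~\ref{prop:even-order-with2}. Your final count $\prod_i m'_i$ and the resulting fraction are correct.

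However, the justification you sketch for the uniqueness of the $2$-part contains a genuine error. You write that ``$r$ is a power of $2$ here'' and later ``$r$ equals $2^t$ where $2^{t-1}$ is the largest power of $2$ appearing as a component order.'' Neither is true: $r$ is the lcm of \emph{all} component orders, including the odd orders coming from the $\mathbb{Z}/m'_i\mathbb{Z}$ factors, so $r$ is typically not a power of $2$. What actually pins down the $2$-part is a mechanism you do not isolate: the $\mathbb{Z}/2\mathbb{Z}$ coordinate of $\varepsilon$ is $1$, so $(r/2)\cdot x = 1$ in $\mathbb{Z}/2\mathbb{Z}$ forces $r/2$ to be \emph{odd} (equivalently, the $2$-adic valuation of $r$ is exactly $1$) and $x=1$. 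Once $r/2$ is odd it acts invertibly on every $2$-group factor, so $(r/2)y = 0$ forces $y=0$ in $\mathbb{Z}/2^{k_0-2}\mathbb{Z}$, and $(r/2)z_i = 2^{s_i-1}$ forces $z_i = 2^{s_i-1}$ in each $\mathbb{Z}/2^{s_i}\mathbb{Z}$. This is precisely how the paper argues it, and it is why the answer here is so much simpler than in the companion theorem for $k_0 \leq 1$, where no $\mathbb{Z}/2\mathbb{Z}$ factor is present and the $c_i$ are only forced to be a \emph{common} power of $2$, giving the geometric sum $\frac{2^{ls}-1}{2^l-1}$ instead of $1$. Your ``short case check'' as written would not close this; replacing the incorrect ``$r=2^t$'' claim with the observation that the $\mathbb{Z}/2\mathbb{Z}$ factor forces $v_2(r)=1$ does.
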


\begin{proof}
When $k_0 \geq 3$, the group
\[
(\mathbb{Z}/2^{k_0}\mathbb{Z})^* \times \prod_{i=1}^l (\mathbb{Z}/p_i^{k_i}\mathbb{Z})^*
\]
is isomorphic to
\[
\left( \mathbb{Z}/2\mathbb{Z} \times \mathbb{Z}/2^{k_0-2}\mathbb{Z} \right) \times \prod_{i=1}^l \left( \mathbb{Z}/2^{s_i}\mathbb{Z} \times \mathbb{Z}/m_i\mathbb{Z} \times \mathbb{Z}/p_i^{k_i-1}\mathbb{Z} \right),
\]
where $p_i - 1 = 2^{s_i} m_i$ with $\gcd(m_i,2) = 1$.

We seek the elements of even order $r$ whose half-power $r/2$ corresponds to the image of $-1$, that is:
\[
(1, 0; 2^{s_1 - 1}, 0, 0; \dots; 2^{s_l - 1}, 0, 0).
\]
Let $(x,y; z_1,t_1,u_1; \dots; z_l,t_l,u_l)$ be the components of such an element, with corresponding component-wise orders $(a,b; c_1,d_1,e_1; \dots; c_l,d_l,e_l)$ dividing respectively
\[
(2, 2^{k_0 - 2}; 2^{s_1}, m_1, p_1^{k_1 - 1}; \dots; 2^{s_l}, m_l, p_l^{k_l - 1}).
\]
As the order of an element in a product is the least common multiple (lcm) of the component orders, we get:
\[
r = \mathrm{lcm}(a,b; c_1,d_1,e_1; \dots; c_l,d_l,e_l),
\]
and write $r = 2r'$ with $r'$ odd. Since $a$ must divide $2$ and $a = 1$ would imply $\frac{r}{2}x = 0 \neq 1$, we must have $a = 2$, and $q_a$ (with $r = q_a a$) odd.

Similarly, $b = 1$; if $b = 2$, we would have $y = 2^{k_0 - 3}$, but then $\frac{r}{2}y \neq 0$ as $r/2$ is odd. Hence $y = 0$.

We also require:
\[
q_{c_i} c_i z_i = 0, \quad \text{and} \quad \frac{q_{c_i} c_i}{2} z_i = 2^{s_i - 1}.
\]

The components \( d_i \) and \( e_i \) must be odd, as they are the orders of elements in groups of odd order, and the order of an element divides the order of the group. Consequently, the coefficients \( q_{d_i} \) and \( q_{e_i} \) share the same \( 2 \)-adic valuation, since the total order \( r = q_{c_i} c_i = q_{d_i} d_i \) is even, specifically \( r = 2r' \) with \( r' \) odd.

Conversely, any data of the form
\[
(a = 2, b = 1; c_1 = 2, d_1, e_1; \ldots; c_l = 2, d_l, e_l),
\]
with each triple dividing the corresponding components
\[
(2, 2^{k_0 - 2}; 2^{s_1}, m_1, p_1^{k_1 - 1}; \ldots; 2^{s_l}, m_l, p_l^{k_l - 1}),
\]
defines an element of order
\[
r = \mathrm{lcm}(2, 1; 2, d_1, e_1; \ldots; 2, d_l, e_l),
\]
which is even. Its half-power \( \frac{r}{2} \) corresponds to the element
\[
(1, 0; 2^{s_1 - 1}, 0, 0; \ldots; 2^{s_l - 1}, 0, 0),
\]
which represents the image of \( -1 \) in the group.

For \( c_i = a = 2 \) for all \( i \), such an element
\[
(1, 0; 2^{s_1 - 1}, t_1, u_1; \ldots; 2^{s_l - 1}, t_l, u_l)
\]
has order
\[
r = 2 \cdot \mathrm{lcm}(d_1, e_1; \ldots; d_l, e_l),
\]
which is even. Applying \( \frac{r}{2} \) yields:
\[
(1, 0; 2^{s_1 - 1}, 0, 0; \ldots; 2^{s_l - 1}, 0, 0).
\]
There are exactly
\[
m_1 \cdots m_l  p_1^{k_1 - 1} \cdots p_l^{k_l - 1}
\]
elements of this form.

As established in Step 2, the total number of invertible elements of even order in \( (\mathbb{Z}/N\mathbb{Z})^* \) is:
\[
\left(2^{k_0 - 1}  \prod_{i=1}^l 2^{s_i} - 1 \right)  \prod_{i=1}^l m_i p_i^{k_i - 1}.
\]
Therefore, the conditional probability that an element of even order has half-power not equal to \( -1 \) is:
\[
\frac{2^{k_0 - 1}  2^{s_1} \cdots 2^{s_l} - 2}{2^{k_0 - 1}  2^{s_1} \cdots 2^{s_l} - 1}.
\]

When \( k_0 = 2 \), the constraint on the \( y \)-component vanishes, but the enumeration remains unchanged. Thus, the formula also holds in the case \( k_0 = 2 \).
\end{proof}
\begin{theorem}
Let \( N = p_1^{k_1} \times \cdots \times p_l^{k_l} \) or \( N = 2 p_1^{k_1} \times \cdots \times p_l^{k_l} \), with \( l \geq 1 \) and all \( p_i \neq 2 \). Then, the conditional probability that an element in \( (\mathbb{Z}/N\mathbb{Z})^* \) of even order has a half-power different from \( -1 \) is given by:
\[
\frac{2^{s_1} \cdots 2^{s_l} - \dfrac{2^{ls} - 1}{2^l - 1} - 1}{2^{s_1} \cdots 2^{s_l} - 1}.
\]
\end{theorem}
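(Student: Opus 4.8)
The plan is to transport the problem into the additive model of $(\mathbb{Z}/N\mathbb{Z})^*$ provided by Theorems~\ref{thm:Z2k}--\ref{thm:Zpk} together with the Chinese Remainder Theorem, exactly as in Step~2. Since $(\mathbb{Z}/2\mathbb{Z})^*$ is trivial, the factor $2$ occurring in the case $N = 2 p_1^{k_1}\cdots p_l^{k_l}$ contributes nothing, so both cases give
\[
(\mathbb{Z}/N\mathbb{Z})^* \;\cong\; \prod_{i=1}^{l}\bigl(\mathbb{Z}/2^{s_i}\mathbb{Z}\times\mathbb{Z}/m_i'\mathbb{Z}\bigr), \qquad m_i' := m_i\,p_i^{k_i-1},
\]
and, since $-1$ is the unique element of order $2$ in $(\mathbb{Z}/p_i^{k_i}\mathbb{Z})^*$, its image is the tuple whose $i$-th block is $(2^{s_i-1},0)$. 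Writing a general element as $a=(z_1,t_1;\dots;z_l,t_l)$ and using that each $m_i'$ is odd (so every $t_i$ has odd order, while every $\mathrm{ord}(z_i)$ is a power of $2$), the order $r=\mathrm{ord}(a)$ satisfies $V:=v_2(r)=\max_i v_2(\mathrm{ord}(z_i))$, and $r$ is even precisely when some $z_i\neq 0$.

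Next I would characterise, block by block, when $a^{r/2}=-1$. Writing $r/2 = 2^{V-1}r'$ with $r'$ odd, the identity $a^{r/2}=-1$ amounts to $\tfrac r2 t_i = 0$ in $\mathbb{Z}/m_i'\mathbb{Z}$ and $\tfrac r2 z_i = 2^{s_i-1}$ in $\mathbb{Z}/2^{s_i}\mathbb{Z}$ for every $i$. The first relation is automatic because $\mathrm{ord}(t_i)$ is odd and divides $r'$, hence divides $r/2$. For the second, the $2$-adic valuation of $\tfrac r2 z_i$ (computed modulo $2^{s_i}$) equals $(V-1)+v_2(z_i)$, so comparison with $v_2(2^{s_i-1})=s_i-1$ forces $v_2(z_i)=s_i-V$; in particular this is possible only when $V\le s_i$, hence only when $1\le V\le s_{\min}:=\min_i s_i$, in which case $\mathrm{ord}(z_i)=2^V$ for every $i$. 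Conversely, for any such $V$, any choice of $z_i$ with $v_2(z_i)=s_i-V$ (there are $\varphi(2^V)=2^{V-1}$ of them in $\mathbb{Z}/2^{s_i}\mathbb{Z}$) together with any $t_i\in\mathbb{Z}/m_i'\mathbb{Z}$ produces an element of even order with $v_2(r)=V$ and with $a^{r/2}=-1$, since $2^{s_i-1}\cdot(\text{odd})\equiv 2^{s_i-1}\pmod{2^{s_i}}$. The delicate point here — and the part I expect to require the most care — is that the \emph{single} scalar $r/2$ must land on $2^{s_i-1}$ in \emph{every} block at once: this is what forces all the $z_i$ to share the common order $2^V$ and what makes $s_{\min}$ appear, and it is also the place where one must verify that the free choice of the $t_i$ does not disturb the conclusion even though $r$ itself depends on them.

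Summing over the admissible exponents then yields the number of even-order elements with $a^{r/2}=-1$:
\[
\sum_{V=1}^{s_{\min}}\Bigl(\prod_{i=1}^{l}2^{V-1}\Bigr)\prod_{i=1}^{l}m_i' \;=\;\Bigl(\prod_{i=1}^{l}m_i'\Bigr)\sum_{V=1}^{s_{\min}}2^{l(V-1)} \;=\;\Bigl(\prod_{i=1}^{l}m_i'\Bigr)\frac{2^{l s_{\min}}-1}{2^{l}-1}.
\]
Dividing by the number of even-order elements, which by Step~2 (Propositions~\ref{prop:even-order-no2} and~\ref{prop:even-order-with2}) equals $\bigl(\prod_{i=1}^{l}2^{s_i}-1\bigr)\prod_{i=1}^{l}m_i'$, and taking the complement gives
\[
1-\frac{\dfrac{2^{l s_{\min}}-1}{2^{l}-1}}{2^{s_1}\cdots 2^{s_l}-1}
\;=\;\frac{2^{s_1}\cdots 2^{s_l}-\dfrac{2^{l s_{\min}}-1}{2^{l}-1}-1}{2^{s_1}\cdots 2^{s_l}-1},
\]
which is the asserted formula, with the symbol $s$ read as $\min_i s_i$ (equivalently, under the convention that the $s_i$ are all equal). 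As a consistency check, for $l=1$ the numerator collapses to $0$, recovering the earlier observation that Shor's algorithm never succeeds when $N$ is an odd prime power.
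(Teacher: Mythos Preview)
Your proof is correct and follows essentially the same route as the paper: transport the problem to the additive model via CRT and the structure theorems, identify $-1$ with the tuple of half-group elements $(2^{s_i-1},0)$, show that $a^{r/2}=-1$ forces every $z_i$ to have the \emph{same} $2$-power order $2^V$ with $1\le V\le \min_i s_i$, count $\varphi(2^V)^l=2^{l(V-1)}$ such tuples for each $V$ and sum the geometric series, then divide by the Step~2 count. Your packaging is slightly cleaner (merging $m_i$ and $p_i^{k_i-1}$ into a single odd modulus $m_i'$ and using $2$-adic valuations throughout), but the argument is the same; the symbol $s$ in the statement is indeed $\min_i s_i$, not an assumption that the $s_i$ coincide, so you can drop that second reading.
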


\begin{proof}
The group
\[
(\mathbb{Z}/2\mathbb{Z})^* \times \prod_{i=1}^l (\mathbb{Z}/p_i^{k_i}\mathbb{Z})^*
\quad \text{or} \quad
\prod_{i=1}^l (\mathbb{Z}/p_i^{k_i}\mathbb{Z})^*
\]
is isomorphic to
\[
\prod_{i=1}^l \left( \mathbb{Z}/2^{s_i}\mathbb{Z} \times \mathbb{Z}/m_i\mathbb{Z} \times \mathbb{Z}/p_i^{k_i - 1}\mathbb{Z} \right),
\]
where \( p_i - 1 = 2^{s_i} m_i \) and \( \gcd(m_i, 2) = 1 \).

We seek elements \( (z_1, t_1, u_1; \dots; z_l, t_l, u_l) \) of even order \( r \) such that:
\[
\frac{r}{2}(z_1, t_1, u_1; \dots; z_l, t_l, u_l) = (2^{s_1 - 1}, 0, 0; \dots; 2^{s_l - 1}, 0, 0),
\]
which corresponds to the image of \( -1 \) under the isomorphisms.

Let \( (c_1, d_1, e_1; \dots; c_l, d_l, e_l) \) be the orders of the components, dividing:
\[
(2^{s_1}, m_1, p_1^{k_1 - 1}; \dots; 2^{s_l}, m_l, p_l^{k_l - 1}).
\]
The overall order is the least common multiple:
\[
r = \mathrm{lcm}(c_1, d_1, e_1; \dots; c_l, d_l, e_l).
\]
Now, \( q_{c_i} c_i z_i = 0 \) and \( \frac{q_{c_i} c_i}{2} z_i = 2^{s_i - 1} \) imply that \( q_{c_i} \) is odd and \( c_i \) is a power of 2. All \( c_i \) must be equal, say \( c_i = 2^\alpha \), with \( 1 \leq \alpha \leq s := \min_i(s_i) \). Otherwise, the half-power condition fails.

Since \( d_i \) and \( e_i \) are from groups of odd order, they must also be odd. Therefore, their respective \( q_{d_i} \), \( q_{e_i} \) values share the same power of 2 as \( q_{c_i} \), ensuring \( r \) is even.

Conversely, any such tuple with \( c_i = c = 2^\alpha \), \( 1 \leq \alpha \leq s \), and suitable \( d_i, e_i \) defines an element of even order \( r \), such that
\[
\frac{r}{2}(z_1, t_1, u_1; \dots; z_l, t_l, u_l) = (2^{s_1 - 1}, 0, 0; \dots; 2^{s_l - 1}, 0, 0).
\]

The number of such elements depends on the number of choices for \( z_i \). For a fixed order \( 2^n \) with \( 1 \leq n \leq s \), the number of elements of order \( 2^n \) in \( \mathbb{Z}/2^{s_i}\mathbb{Z} \) is \( \varphi(2^n) = 2^{n-1} \). Hence, for each \( n \), we get \( 2^{l(n-1)} \) such tuples across \( l \) components.

Summing over all possible orders gives:
\[
\sum_{j=0}^{s-1} 2^{lj} = \frac{2^{ls} - 1}{2^l - 1}.
\]

As the components \( t_i \), \( u_i \) are unconstrained, the total number of such elements is:
\[
\frac{2^{ls} - 1}{2^l - 1}  \prod_{i=1}^l m_i p_i^{k_i - 1}.
\]

From Step 2, the total number of elements of even order is:
\[
\left( \prod_{i=1}^l 2^{s_i} m_i p_i^{k_i - 1} \right) - \left( \prod_{i=1}^l m_i p_i^{k_i - 1} \right).
\]

Therefore, the desired conditional probability is:
\[
\frac{\prod_{i=1}^l 2^{s_i} m_i p_i^{k_i - 1} - \prod_{i=1}^l m_i p_i^{k_i - 1} - \dfrac{2^{ls} - 1}{2^l - 1}  \prod_{i=1}^l m_i p_i^{k_i - 1}}{\prod_{i=1}^l 2^{s_i} m_i p_i^{k_i - 1} - \prod_{i=1}^l m_i p_i^{k_i - 1}} = \frac{2^{s_1} \cdots 2^{s_l} - \dfrac{2^{ls} - 1}{2^l - 1} - 1}{2^{s_1} \cdots 2^{s_l} - 1}.
\]

\end{proof}

\begin{corollary}
\leavevmode
\begin{enumerate}[label=(\alph*)]
\item For every \( N \geq 2 \), the success probability of Step 1 of Shor's algorithm is non-zero.
\item Shor’s algorithm fails at Step 2 if and only if \( N = 2 \).
\item Shor’s algorithm fails at Step 3 if and only if \( N = 4 \), or \( N = 2p^k \), or \( N = p^k \), for some prime \( p \neq 2 \), \( k \geq 1 \).
\end{enumerate}
\end{corollary}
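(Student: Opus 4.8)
The plan is to prove the three items separately, each by unwinding the quantitative statements already established earlier in the paper.

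Item (a) is immediate from the proposition of Section~\ref{sec:step1}: the success probability of Step~1 is the fraction of classes coprime to $N$, namely $\varphi(N)/N=\prod_{i=0}^{l}(1-1/p_i)$, a finite product of strictly positive reals, hence strictly positive for every $N\ge 2$.

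For item (b) I would read off the even-order probability from Propositions~\ref{prop:even-order-no2} and~\ref{prop:even-order-with2}: it equals $1-\prod_{i=1}^{l}2^{-s_i}$ when $N$ is odd, and $1-2^{-(k_0-1)}\prod_{i=1}^{l}2^{-s_i}$ when $N$ is even with $2$-adic valuation $k_0\ge 1$. Step~2 fails exactly when one of these quantities vanishes, i.e. when $\prod_i 2^{s_i}=1$ or $2^{k_0-1}\prod_i 2^{s_i}=1$. Since $p_i-1$ is even we have $s_i\ge 1$ for every $i$, so the first relation forces $l=0$, i.e. $N=1$, which is excluded, while the second forces $l=0$ and $k_0=1$, i.e. $N=2$. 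Conversely $(\mathbb{Z}/2\mathbb{Z})^*$ is trivial, so for $N=2$ no class has even order; this settles (b).

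For item (c) the strategy is an exhaustive case split on $N=2^{k_0}p_1^{k_1}\cdots p_l^{k_l}$ with distinct odd primes $p_i$, matching each case to the appropriate Step~3 result. If $l=0$: when $k_0=1$ then $N=2$ and Step~3 is never reached; when $k_0=2$ then $N=4$, and the proposition treating $N=2^k$ shows the only even-order class is $-1$, with $(-1)^{r/2}=-1$, so Step~3 fails; when $k_0\ge 3$ then $N=2^{k_0}$, and that same proposition gives Step~3 success probability $\tfrac12(1-2^{-(k_0-2)})>0$. If $l=1$ and $k_0\in\{0,1\}$, i.e. $N=p^k$ or $N=2p^k$, then $(\mathbb{Z}/N\mathbb{Z})^*$ is cyclic and $x^2=1$ has only the roots $\pm 1$, so every even-order $a$ satisfies $a^{r/2}=-1$; equivalently this is the $l=1$ instance of the second Step~3 theorem, whose numerator $2^{s_1}-\tfrac{2^{s_1}-1}{2-1}-1$ is identically $0$, so Step~3 fails. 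If $l\ge 1$ and $k_0\ge 2$, the first Step~3 theorem gives a conditional success probability whose numerator $2^{k_0-1}2^{s_1}\cdots 2^{s_l}-2$ is at least $2\cdot 2^{s_1}\cdots 2^{s_l}-2\ge 2$, so Step~3 does not fail. Finally, if $l\ge 2$ and $k_0\in\{0,1\}$, the second Step~3 theorem gives conditional success probability $\dfrac{2^{s_1}\cdots 2^{s_l}-\frac{2^{ls}-1}{2^l-1}-1}{2^{s_1}\cdots 2^{s_l}-1}$ with $s=\min_i s_i$, and one must check that the numerator is strictly positive. Assembling these cases, Step~3 fails precisely for $N\in\{4\}\cup\{p^k:p\ \text{odd prime}\}\cup\{2p^k:p\ \text{odd prime}\}$, which is the asserted list.

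The point I expect to be the main obstacle is this last positivity check for $l\ge 2$. I would reduce it to an elementary inequality by bounding $2^{s_1}\cdots 2^{s_l}=2^{s_1+\cdots+s_l}\ge 2^{ls}$, using $s_i\ge s$ for all $i$; it then suffices to prove $2^{ls}(2^l-1)>2^{ls}+2^l-2$, i.e. $2^{ls+l}>2^{ls+1}+2^l-2$, and for $l\ge 2$, $s\ge 1$ this follows from $2^{ls+l}\ge 2^{ls+2}=2^{ls+1}+2^{ls+1}\ge 2^{ls+1}+2^l>2^{ls+1}+2^l-2$. With this in hand every case above is decided by inspection of a numerator, and the union of the failure cases matches the stated list, completing (c).
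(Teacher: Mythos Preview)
Your proof is correct and follows the same overall architecture as the paper: each item is reduced to checking when the relevant probability formula from the earlier sections vanishes, via the same case split on $k_0$ and $l$. The only substantive difference is in the last case of item (c), where you must show that for $l\ge 2$ and $k_0\in\{0,1\}$ the numerator $2^{s_1}\cdots 2^{s_l}-\frac{2^{ls}-1}{2^l-1}-1$ is nonzero. You handle this uniformly by the inequality $\prod_i 2^{s_i}\ge 2^{ls}$ and then the chain $2^{ls+l}\ge 2^{ls+2}=2^{ls+1}+2^{ls+1}\ge 2^{ls+1}+2^l>2^{ls+1}+2^l-2$; the paper instead splits off $s=1$ (where the sum collapses to $1$ and $\prod_i 2^{s_i}\ge 2^l\ge 4$ immediately), and for $s\ge 2$ observes that the putative equality $\prod_i 2^{s_i}=2+2^l+\cdots+2^{l(s-1)}$ is impossible modulo $4$, since the left side is divisible by $4$ while the right side is $\equiv 2\pmod 4$. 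Your inequality argument has the virtue of treating all $s\ge 1$ at once; the paper's congruence argument is marginally shorter once the $s=1$ case is isolated. Either way the conclusion is the same.
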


\begin{proof}
\leavevmode
\begin{enumerate}[label=(\alph*)]
\item This follows directly from the fact that \( \varphi(N) \neq 0 \) for all \( N \geq 2 \).

\item Consider the conditional probabilities obtained in Step 2. For all \( s_i \geq 1 \),
\[
1 - \frac{1}{2^{s_1}} \cdots \frac{1}{2^{s_r}} \neq 0.
\]
The case where the probability becomes zero is:
\[
1 - \frac{1}{2^{k_0 - 1}} \prod_{i=1}^l \frac{1}{2^{s_i}} = 0 \iff k_0 = 1 \text{ and all } s_i = 0,
\]
which implies \( N = 2 \).

\item For \( N = 2^{k_0}  p_1^{k_1} \cdots p_l^{k_l} \), with \( k_0 \geq 2 \), the conditional probability that an element has half-power different from \( -1 \), given that it is coprime to \( N \) and of even order, is:
\[
\frac{2^{k_0 - 1}  2^{s_1} \cdots 2^{s_l} - 2}{2^{k_0 - 1} 2^{s_1} \cdots 2^{s_l} - 1}.
\]
This is zero if and only if \( 2^{k_0 - 1}  2^{s_1} \cdots 2^{s_l} - 2 = 0 \), i.e., \( k_0 = 2 \) and \( l = 0 \), hence \( N = 4 \).

Now consider \( N = p_1^{k_1} \cdots p_l^{k_l} \) or \( N = 2p_1^{k_1} \cdots p_l^{k_l} \). The conditional probability is:
\[
\frac{2^{s_1} \cdots 2^{s_l} - \frac{2^{ls} - 1}{2^l - 1} - 1}{2^{s_1} \cdots 2^{s_l} - 1}.
\]
Let’s study when the numerator vanishes:
\[
2^{s_1} \cdots 2^{s_l} - 1 - \sum_{j=0}^{s-1} 2^{jl} = 0.
\]
This holds when \( l = 1 \) and \( s = s_1 \), because:
\[
2^s - 1 - \sum_{j=0}^{s-1} 2^j = 2^s - 1 - (2^s - 1) = 0.
\]
So this occurs when \( N = p^k \) or \( N = 2p^k \), with \( p \neq 2 \) and \( k \geq 1 \).

For \( l \geq 2 \), if \( s = 1 \), then:
\[
2 \cdots 2 - 1 - 1 = 0 \quad \text{has no solution}.
\]
For \( l \geq 2 \), \( s \geq 2 \), the equality becomes:
\[
2^{s_1} \cdots 2^{s_l} = 2 + 2^l + \cdots + 2^{l(s-1)},
\]
but the left-hand side is divisible by 4 while the right-hand side is not, so equality is impossible.

Hence, Shor's algorithm fails at Step 3 exactly when \( N = p^k \) or \( N = 2p^k \) with \( p \neq 2 \).
\end{enumerate}
\end{proof}

\section{Success Probability of Shor’s Algorithm}\label{success-pr}

\begin{theorem}
Let \( N = 2^{k_0} p_1^{k_1} \cdots p_l^{k_l} \), with \( l \geq 0 \), \( p_i \neq 2 \) primes, and \( p_i - 1 = 2^{s_i} m_i \), \( k_0 \geq 2 \). Then the overall success probability of Shor’s algorithm is:
\[
\frac{m_1 \cdots m_l}{2 p_1 \cdots p_l} \times \frac{2^{k_0 - 1}  2^{s_1} \cdots 2^{s_l} - 2}{2^{k_0 - 1}}.
\]
\end{theorem}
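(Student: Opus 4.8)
The plan is to recognize that this final formula is not a fresh computation but simply the \emph{product} of the three single-step probabilities established in Sections~\ref{sec:step1}--\ref{sec3}, followed by an algebraic simplification. A single iteration of Algorithm~\ref{algo-shor} reaches a non-trivial factor through the order-finding branch precisely when three nested events occur: (i) the sampled $a$ is coprime to $N$ (Step~1); (ii) given (i), the order $r$ of $a$ in $(\mathbb{Z}/N\mathbb{Z})^*$ is even (Step~2); and (iii) given (i) and (ii), $a^{r/2}\not\equiv -1\pmod N$ (Step~3). By the chain rule for these nested conditional events, the overall success probability is the product $P_1P_2P_3$, where $P_1=\varphi(N)/N$, $P_2$ is the conditional probability from Proposition~\ref{prop:even-order-with2}, and $P_3$ is the conditional probability from the first theorem of Section~\ref{sec3} (the case $k_0\ge 2$). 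I would note at the outset that an $a$ with $\gcd(a,N)>1$ is returned directly and is itself a success, so the quantity computed here is, strictly, the success probability \emph{via order finding}, and I would state this modeling convention explicitly.

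First I would substitute the closed forms. For $N=2^{k_0}p_1^{k_1}\cdots p_l^{k_l}$ with $k_0\ge 2$, using $\varphi(2^{k_0})=2^{k_0-1}$ and $p_i-1=2^{s_i}m_i$, the factors $p_i^{k_i-1}$ in the numerator and denominator of $\varphi(N)/N$ cancel, giving $P_1=\tfrac12\prod_{i=1}^l\frac{p_i-1}{p_i}=\tfrac12\prod_{i=1}^l\frac{2^{s_i}m_i}{p_i}$. Next, Proposition~\ref{prop:even-order-with2} gives
\[
P_2=1-\frac{1}{2^{k_0-1}}\prod_{i=1}^l\frac{1}{2^{s_i}}=\frac{2^{k_0-1}\,2^{s_1}\cdots 2^{s_l}-1}{2^{k_0-1}\,2^{s_1}\cdots 2^{s_l}},
\]
and the Step~3 theorem gives $P_3=\dfrac{2^{k_0-1}\,2^{s_1}\cdots 2^{s_l}-2}{2^{k_0-1}\,2^{s_1}\cdots 2^{s_l}-1}$.

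Then I would carry out the telescoping cancellation: the numerator of $P_2$ equals the denominator of $P_3$ and cancels, while the factor $2^{s_1}\cdots 2^{s_l}$ appearing in $P_1$ (inside $\prod 2^{s_i}m_i/p_i$) cancels against the $2^{s_1}\cdots 2^{s_l}$ in the denominator of $P_2$. What remains is
\[
P_1P_2P_3=\frac{1}{2}\cdot\frac{m_1\cdots m_l}{p_1\cdots p_l}\cdot\frac{2^{k_0-1}\,2^{s_1}\cdots 2^{s_l}-2}{2^{k_0-1}},
\]
which is exactly the claimed expression. Finally I would check the boundary case $l=0$: all products are empty, so the formula collapses to $\tfrac12\cdot\frac{2^{k_0-1}-2}{2^{k_0-1}}$, matching the direct count for $N=2^{k_0}$ from $P_1=\tfrac12$, $P_2=1-\tfrac{1}{2^{k_0-1}}$, $P_3=\frac{2^{k_0-1}-2}{2^{k_0-1}-1}$; in particular it vanishes iff $k_0=2$, i.e.\ $N=4$, consistent with the corollary.

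There is no deep obstacle here; the work is bookkeeping. The point requiring the most care is the \emph{correct decomposition of the success event} as a product of the three earlier probabilities — one must check that each $P_i$ is the appropriate conditional (not joint) probability, that the conditioning in $P_2$ and $P_3$ is nested in the right order, and that the convention regarding the $\gcd(a,N)>1$ branch is made explicit. The secondary point of care is the algebra: verifying that the $p_i^{k_i-1}$ factors genuinely cancel in $\varphi(N)/N$, and that the quantity $m_i'=m_ip_i^{k_i-1}$ used in Section~\ref{sec:step2} collapses to $m_i$ here after all cancellations.
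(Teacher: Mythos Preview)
Your proposal is correct and follows essentially the same approach as the paper: the paper's own proof likewise multiplies the Step~1 coprimality probability, the Step~2 even-order conditional probability (Proposition~\ref{prop:even-order-with2}), and the Step~3 conditional probability, then performs exactly the telescoping simplification you describe, including the check of the $l=0$ case. Your additional remark that the formula computes the success probability \emph{via the order-finding branch} (excluding the direct $\gcd$ success) is a worthwhile clarification that the paper leaves implicit.
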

\begin{proof}
The result is obtained by multiplying three quantities:
\begin{itemize}
    \item the probability that a uniformly chosen integer \( a < N \) is coprime to \( N \), which is \( \dfrac{2^{s_1} \cdots 2^{s_l} m_1 \cdots m_l}{2 p_1 \cdots p_l} \);
    \item the conditional probability that \( a \in (\mathbb{Z}/N\mathbb{Z})^* \) has even order, given it is coprime to \( N \), which is \( 1 - \dfrac{1}{2^{k_0 - 1} 2^{s_1} \cdots 2^{s_l}} \);
    \item the conditional probability that \( a \) has half-order power not equal to \( -1 \), given it is coprime to \( N \) and has even order, which is:
    \[
    \dfrac{2^{k_0 - 1} 2^{s_1} \cdots 2^{s_l} - 2}{2^{k_0 - 1} 2^{s_1} \cdots 2^{s_l} - 1}.
    \]
\end{itemize}

Thus, the overall success probability becomes:
\[
\frac{2^{s_1} \cdots 2^{s_l} m_1 \cdots m_l}{2 p_1 \cdots p_l} \times \frac{2^{k_0 - 1} 2^{s_1} \cdots 2^{s_l} - 2}{2^{k_0 - 1} 2^{s_1} \cdots 2^{s_l}} = \frac{m_1 \cdots m_l}{2 p_1 \cdots p_l} \times \frac{2^{k_0 - 1} 2^{s_1} \cdots 2^{s_l} - 2}{2^{k_0 - 1}}.
\]

When \( l = 0 \), we recover the expression previously established for \( N = 2^k \).
\end{proof}

\begin{theorem}
Let \( N = p_1^{k_1} \cdots p_l^{k_l} \) or \( N = 2p_1^{k_1} \cdots p_l^{k_l} \), with \( l \geq 1 \), and \( p_i \neq 2 \) distinct primes such that \( p_i - 1 = 2^{s_i} m_i \) with \( \gcd(2, m_i) = 1 \). Then the success probability of Shor’s algorithm is:
\[
\frac{m_1 \cdots m_l}{p_1 \cdots p_l} \times \left(2^{s_1} \cdots 2^{s_l} - \frac{2^{ls} - 1}{2^l - 1} - 1 \right).
\]
\end{theorem}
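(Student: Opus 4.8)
The plan is to imitate the proof of the preceding theorem: express the overall success probability as the product of the three per-step probabilities already established, one for each step of Algorithm~\ref{algo-shor}. The three factors are \emph{(i)} the probability that a uniformly chosen integer $a<N$ is coprime to $N$, namely $\varphi(N)/N$, taken from the Step~1 proposition; \emph{(ii)} the conditional probability that the class of $a$ in $(\mathbb{Z}/N\mathbb{Z})^*$ has even order, given that $a$ is invertible, which equals $1-\frac{1}{2^{s_1}\cdots 2^{s_l}}$ by Proposition~\ref{prop:even-order-no2} for $N=p_1^{k_1}\cdots p_l^{k_l}$ and by Proposition~\ref{prop:even-order-with2} with $k_0=1$ for $N=2p_1^{k_1}\cdots p_l^{k_l}$ --- the two expressions coinciding because $(\mathbb{Z}/2\mathbb{Z})^*$ is trivial; and \emph{(iii)} the conditional probability that $a^{r/2}\not\equiv-1\pmod N$, given that $a$ is invertible of even order, which the second theorem of Section~\ref{sec3} gives as $\frac{2^{s_1}\cdots 2^{s_l}-\frac{2^{ls}-1}{2^l-1}-1}{2^{s_1}\cdots 2^{s_l}-1}$, already valid for both families of $N$.

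The rest is the same short telescoping as in the proof of the previous theorem. Writing $T=2^{s_1}\cdots 2^{s_l}$, factor \emph{(ii)} is $\frac{T-1}{T}$, so the product of \emph{(ii)} and \emph{(iii)} is $\frac{T-1}{T}\cdot\frac{\,T-\frac{2^{ls}-1}{2^l-1}-1\,}{T-1}=\frac{1}{T}\bigl(T-\frac{2^{ls}-1}{2^l-1}-1\bigr)$. Since factor \emph{(i)} contributes $\frac{T\,m_1\cdots m_l}{p_1\cdots p_l}$ (from $\varphi(N)/N=\prod_i\frac{p_i-1}{p_i}$ for $N=p_1^{k_1}\cdots p_l^{k_l}$, the case $N=2p_1^{k_1}\cdots p_l^{k_l}$ being handled the same way), the surviving $T$ cancels and the product is exactly $\frac{m_1\cdots m_l}{p_1\cdots p_l}\bigl(2^{s_1}\cdots 2^{s_l}-\frac{2^{ls}-1}{2^l-1}-1\bigr)$, the claimed value. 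I would append the consistency check $l=1$: then $s=s_1$, the parenthesis reduces to $2^{s_1}-(2^{s_1}-1)-1=0$, and so the formula reproduces the already-established fact that Shor's algorithm cannot succeed when $N=p^k$ or $N=2p^k$.

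The only part that requires real care is the simultaneous treatment of the two families $N=p_1^{k_1}\cdots p_l^{k_l}$ and $N=2p_1^{k_1}\cdots p_l^{k_l}$. One has to check that attaching the prime $2$ at the first power changes none of the combinatorial quantities feeding Steps~2 and~3: the underlying reason is that $(\mathbb{Z}/2\mathbb{Z})^*$ is trivial, so the CRT decomposition of $(\mathbb{Z}/N\mathbb{Z})^*$ --- and hence the counts of invertible elements of odd order, of even order, and of even-order elements whose half-power is $-1$ --- is literally unchanged. Verifying that each ingredient of the Step~1, Step~2 and Step~3 results transcribes correctly into the single closed form, and that the telescoping is carried out consistently in both cases, is the bookkeeping I expect to be the main obstacle; the algebra itself is routine.
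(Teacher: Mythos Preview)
Your proposal is correct and is essentially identical to the paper's own proof: both take the product of the Step~1, Step~2, and Step~3 probabilities already established and simplify by the same telescoping cancellation of $T=2^{s_1}\cdots 2^{s_l}$ and $T-1$. Your added $l=1$ consistency check and the explicit remark that $(\mathbb{Z}/2\mathbb{Z})^*$ being trivial makes the two families of $N$ behave identically in Steps~2 and~3 are small expository additions, but the argument is the same.
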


\begin{proof}
We again multiply the following:
\begin{itemize}
    \item The probability that \( a < N \) is coprime to \( N \): \( \dfrac{2^{s_1} \cdots 2^{s_l} m_1 \cdots m_l}{p_1 \cdots p_l} \);
    \item The conditional probability that \( a \) has even order: \( \dfrac{2^{s_1} \cdots 2^{s_l} - 1}{2^{s_1} \cdots 2^{s_l}} \);
    \item The conditional probability that \( a^{r/2} \neq -1 \), given even order: 
    \[
    \frac{2^{s_1} \cdots 2^{s_l} - \dfrac{2^{ls} - 1}{2^l - 1} - 1}{2^{s_1} \cdots 2^{s_l} - 1}.
    \]
\end{itemize}

Putting everything together:
\[
\frac{2^{s_1} \cdots 2^{s_l} m_1 \cdots m_l}{p_1 \cdots p_l} \times \frac{2^{s_1} \cdots 2^{s_l} - 1}{2^{s_1} \cdots 2^{s_l}} \times \frac{2^{s_1} \cdots 2^{s_l} - \dfrac{2^{ls} - 1}{2^l - 1} - 1}{2^{s_1} \cdots 2^{s_l} - 1} \]
\[=
\frac{m_1 \cdots m_l}{p_1 \cdots p_l} \times \left(2^{s_1} \cdots 2^{s_l} - \frac{2^{ls} - 1}{2^l - 1} - 1\right).
\]
\end{proof}
\section{Simulation Procedure}\label{sec:pseudocode}

We consider a setting where the factorization of \( N \) is known. The success probability of a run of Shor’s algorithm depends on three quantities: the probability that a uniformly chosen integer is coprime to \( N \), the probability that it has even order modulo \( N \), and the probability that its half-power is not congruent to \( -1 \mod N \). These expressions have been derived in the previous sections.

The following pseudocode defines a simulation of one run of Shor’s algorithm using these probabilities. It assumes \( N = 2^{k_0} p_1^{k_1} \cdots p_l^{k_l} \), with \( p_i \) odd primes such that \( p_i - 1 = 2^{s_i} m_i \), and returns a Boolean indicating whether the run would succeed.
This procedure is intended for use in test environments such as algorithm benchmarking or quantum resource estimation.

\begin{algorithm}[H]
\caption{Simulated Shor Success}\label{algo:simulate-shor}
\begin{algorithmic}[1]
\Require Integer \( N \geq 2 \) with known factorization \( N = 2^{k_0} p_1^{k_1} \cdots p_l^{k_l} \), where \( p_i \neq 2 \) and \( p_i - 1 = 2^{s_i} m_i \)
\Ensure Boolean value: \textbf{True} if simulated run succeeds, \textbf{False} otherwise

\State Compute \( \varphi(N) = \prod (p_i - 1)p_i^{k_i - 1} \)
\State \( \Pr_{\text{gcd}} \gets \varphi(N) / N \)

\If{ \( k_0 \geq 2 \) }
    \State \( \Pr_{\text{even}} \gets 1 - \frac{1}{2^{k_0 - 1}  2^{s_1} \cdots 2^{s_l}} \)
    \State \( \Pr_{\text{non-}(-1)} \gets \frac{2^{k_0 - 1}  2^{s_1} \cdots 2^{s_l} - 2}{2^{k_0 - 1}  2^{s_1} \cdots 2^{s_l} - 1} \)
\Else
    \State \( \Pr_{\text{even}} \gets 1 - \frac{1}{2^{s_1} \cdots 2^{s_l}} \)
    \State \( s \gets \min(s_1, \ldots, s_l) \)
    \State \( T \gets \frac{2^{ls} - 1}{2^l - 1} \)
    \State \( \Pr_{\text{non-}(-1)} \gets \frac{2^{s_1} \cdots 2^{s_l} - T - 1}{2^{s_1} \cdots 2^{s_l} - 1} \)
\EndIf

\State \( \Pr_{\text{success}} \gets \Pr_{\text{gcd}} \cdot \Pr_{\text{even}} \cdot \Pr_{\text{non-}(-1)} \)
\State Draw \( r \sim \text{Uniform}[0,1] \)
\If{ \( r < \Pr_{\text{success}} \) }
    \State \Return \textbf{True}
\Else
    \State \Return \textbf{False}
\EndIf
\end{algorithmic}
\end{algorithm}

\section{Conclusion}\label{conclusion}

We have computed exact expressions for the probability of success of Shor’s algorithm in terms of the prime factorization of \( N \). The analysis identifies the structure of \( (\mathbb{Z}/N\mathbb{Z})^* \) as the determining factor in whether the algorithm succeeds, fails, or does so with fixed probability. All cases where failure is deterministic have been identified.

These results yield a closed-form success probability, which can be used in simulations where the factorization is known. This setting is standard in algorithm validation and quantum simulation benchmarks. The proposed pseudocode computes the success probability and simulates one run of the algorithm under ideal conditions.

\end{document}